\newtheorem{theorem}{Theorem}[section]
\newtheorem{lemma}[theorem]{Lemma}
\newtheorem{corollary}[theorem]{Corollary}
\newtheorem{proposition}[theorem]{Proposition}
\theoremstyle{definition}
\newtheorem{definition}[theorem]{Definition}
\newtheorem{remark}[theorem]{Remark}
\newtheorem{example}[theorem]{Example}
\newcommand{\SN}{\mathcal{S}_N}
\newcommand{\DN}{\mathcal{D}_N}
\newcommand{\cA}{\mathcal{A}}
\newcommand{\cM}{\mathcal{M}}
\newcommand{\Pois}{\mathit{Poisson}}
\newcommand{\Prob}{\mathrm{P}}
\newcommand{\mypath}{\mathcal{P}_k}
\newcommand{\tr}{\mathrm{tr}}
\newcommand{\reg}{\mathsf{reg}}
\newcommand{\RR}{\mathbb{R}}
\newcommand{\e}{\mathrm{e}}
\newcommand{\M}{\mathcal{M}}
\newcommand\CC{\mathbb{C}}
\newcommand\NN{\mathbb{N}}
\newcommand\ds{\displaystyle}
\newcommand\ess{\mathbf{s}}
\newcommand\Aess{\widetilde{\mathbf{s}}}
\newcommand\zee{\mathbf{z}}
\newcommand{\Rad}{\mathrm{Rad}}
\newcommand{\bt}[1]{{\color{black}{#1}}}
\newcommand{\exend}{\hfill $\Diamond$}
\begin{document}

\title[A new algebraic approach]{A new algebraic approach to genome rearrangement models}

\author{Venta Terauds and Jeremy Sumner}

\address{Discipline of Mathematics, School of Natural Sciences, Private Bag 37: University of Tasmania, Sandy Bay, Tasmania 7001, Australia}

\email{venta.terauds@utas.edu.au\\ jeremy.sumner@utas.edu.au  }

\thanks{This work was supported by Australian Research Council Discovery Grant DP180102215. The authors would like to thank Andrew Francis and Joshua Stevenson, for many interesting discussions relating to this work, and the anonymous reviewers, for their constructive comments.}

\begin{abstract}
We present a unified framework for modelling genomes and their rearrangements in a genome algebra, as elements that simultaneously incorporate all physical symmetries. Building on previous work utilising the group algebra of the symmetric group, we explicitly construct the genome algebra for the case of unsigned circular genomes with dihedral symmetry and show that the maximum likelihood estimate (MLE) of genome rearrangement distance can be validly and more efficiently performed in this setting. We then construct the genome algebra for \bt{a more} general case, that is, for genomes \bt{that may be} represented by elements of an arbitrary group and symmetry group, and show that the MLE computations can be performed entirely within this framework. There is no prescribed model in this framework; that is, it allows any choice of rearrangements \bt{that preserve the set of regions}, along with arbitrary weights. Further, since the likelihood function is built from path probabilities -- a generalisation of path counts -- the framework may be utilised for any distance measure that is based on path probabilities. 
\end{abstract}

\maketitle

\section{Introduction}\label{sec:intro}

In the eight decades since Dobzhansky and Sturtevant observed that differences in fruit fly genomes could be explained by a sequence of reversals of genome segments  \cite{Dobzhansky38}, the study of evolution via genome rearrangement has developed into a rich and active field, with diverse applications \cite{chen-mitoch,Darmon-bacteria,Oesper-cancer}. Much work focuses on the calculation of evolutionary distances under rearrangement models, with the distances subsequently used to reconstruct phylogenetic trees. 
For example, minimal rearrangement distances between genomes -- and other distance estimates based on these -- have been studied extensively and, under various model restrictions, can be calculated efficiently \cite{bader01,Wang_Warnow_06,bader_ohlebusch_07,Oliveira_etal_19}. There are, however, good arguments for applying stochastic methods that estimate genomic distance, via rearrangement, as evolutionary time elapsed \cite{mles}, particularly when such an approach allows various rearrangement models to be considered \cite{vjez_circ1}. 

The maximum likelihood approach detailed in \cite{mles} utilised the theory of the symmetric and dihedral groups to model circular genomes and \bt{region set-conserving} rearrangements, motivated by earlier group-theoretical approaches to rearrangement models \cite{andrew14,attilaand}. The combinatorial problem of calculating the maximum likelihood estimate (MLE) of evolutionary distance was then converted into a numerical one in \cite{jezandpet} via the representation theory of the symmetric group algebra. In \cite{vjez_circ1}, the consideration of symmetry was extended to include symmetry of rearrangement models, the role of this in simplifying calculations was explored, and the concrete implementation of the technique for a general model was described. Whilst the representation theory approach reduces the complexity of the MLE computations, the complexity is still of factorial order, meaning that computations for large number of regions remain, for the moment, out of reach. 

In this work, we suggest that the appropriate theoretical setting for such MLE computations is in fact not the symmetric group algebra but a lower-dimensional algebra. In the symmetric group algebra, the basis elements for computations are individual permutations, each representing a rearrangement or a genome in a fixed orientation, and symmetry is incorporated as an extra step in the calculations. To simplify this, we construct an algebra that incorporates the inherent symmetry into each element. Here, the basis elements are {\em permutation clouds}. These correspond to genomes, by simultaneously including all physical orientations; due to the corresponding symmetries of the rearrangement model, they also represent rearrangements in a natural way. 

This approach explains and removes the redundancy in the MLE computations that was observed in \cite{vjez_circ1}. In developing the approach, we firstly focus on the \bt{simple} concrete case of \bt{uni-chromosomal circular} genomes \bt{modelled} with unoriented regions and no distinguished positions, building on previous work \cite{mles,jezandpet,vjez_circ1}. Subsequently, we demonstrate that our results may be applied more generally, for example \bt{to genomic models} that include region orientation and/or origin and terminus of replication. Further, although our focus is on calculation of MLEs, our approach can be applied to calculate other measures of genomic distance under rearrangement; in particular, any that utilise path counts or weighted path counts, such as minimum distance. \bt{Whilst the framework does not specify a rearrangement model --- indeed, one may choose the allowed rearrangements and their weights --- we note that the group-based approach limits us to rearrangements that conserve the set of genomic regions, and thus cannot accommodate insertions, deletions or duplications. We are currently working on expanding the framework to a semigroup-based approach that could incorporate at least some of these rearrangement types. Some of the algebra easily extends to the semigroup case (see Remark~\ref{rem:semigroups}, for example), however there is much yet to be done and this is outside the scope of the present paper.}

In the next section, we outline the details of the symmetric group algebra approach to calculating MLEs \cite{jezandpet,vjez_circ1} for pairs of unsigned, circular genomes, which forms the foundation for the current work. Following this, in Section~\ref{sec:A}, we construct the genome algebra, based on permutation clouds, \bt{for this case} and show that it provides a coherent framework for modelling \bt{such genomes and region set-conserving} rearrangements and for calculating MLEs. Section~\ref{sec:gen} outlines the extension of our results and techniques from permutations with dihedral symmetry to an arbitrary group and symmetry group. This verifies that, as well as incorporating flexibility in the rearrangement model, the framework is not specific to \bt{one} particular genomic model. The paper concludes with a brief discussion section.

\section{Background: the permutation approach}\label{sec:perms}

In this section we set out the theoretical framework for rearrangement models based on permutations, and recall the key elements of the technique for calculating the maximum likelihood estimate of evolutionary distance. Full derivations and details may be found in \cite{vjez_circ1} and the earlier papers \cite{jezandpet,mles}. 
For the specific case study in this and the next section, we model the evolution of single-strand, circular genomes; \bt{we do not consider the regions to be oriented and do not distinguish any positions.}\footnote{\bt{These latter two may be considered simplifying assumptions, since they reduce the size of the state space. However, as will be shown later, the framework just as easily accommodates the alternate cases.}} Genomes that are to be compared share $N$ identified regions\footnote{Here, a {\em region} is a contiguous section of the genome such as a sequence of genes (for an example of how such genomic simplification may be enacted in practice, see \cite{belda05})} of interest \bt{and we consider only rearrangements that conserve the set of regions}. Accordingly, we use unsigned permutations, that is, elements of the symmetric group, $\SN$, to represent both genomes and rearrangements. Explicitly: the regions and positions are each labelled by the integers $\{1,2,\ldots, N\}$, and a given genome is represented by a permutation $\sigma\in\SN$, where
\[ \sigma(i)=j \; \iff \; \textrm{ region } i \textrm{  is in position } j\,.\]
Note that while the region labels are chosen once and are immutable, the position labelling reflects a choice of reference frame (starting position and direction of numbering) that changes when we move the genome in space. 
Since we do not distinguish any positions, there are $2N$ possible choices of reference frame and thus $2N$ distinct permutations that represent any given genome; we denote these by
\begin{equation}\label{eq:genomeclass}
[\sigma]:=\{d\sigma : d\in \DN\}\,.
\end{equation}
Here, $\DN$ is the dihedral group, and the genome has {\em dihedral symmetry}. 

Since $\DN$ is a subgroup of $\SN$, the sets $[\sigma]$ are cosets, that is, each $[\sigma] = \DN\sigma$ is an equivalence class of $\SN$. 
Since a given genome exists independently of its orientation in space, we may identify it with the entire coset \cite{mles,attilaand}. However, in this initial formulation, we choose any one of the permutations from the coset (\ref{eq:genomeclass}), say $\sigma$, to specify the genome, work with this single permutation at first, and incorporate all permutations in the set $[\sigma]$ (all symmetries of the genome) into the likelihood calculations in due course.

We model evolution as a sequence of discrete rearrangement events occurring in continuous time. In this section, as in previous work, we consider a rearrangement to be a single permutation acting on a single permutation; in the next section we shall develop this into the notion of permutation clouds acting on permutation clouds. For now, however, for a genome represented by $\sigma\in\SN$, a rearrangement event is represented by a permutation, $a \in \SN$, acting on $\sigma$ (on the left):
$\sigma \mapsto a\, \sigma$.
We refer to permutations $a$ acting in this way as ``rearrangements''. 

The full biological model for evolution is given by $(\cM,w,dist)$, where $\M \subseteq \SN$ is the set of allowed rearrangements, $w:\cM\to (0,1]$ is the probability distribution on this set, and $dist$ is the probability distribution of the independent rearrangement events in time. 
\bt{One may have biological evidence for including particular types and sizes of rearrangements in the model with differing relative probabilities, or may wish to compare distances computed under differing models (see \cite{vjoshjez} for some specific examples of models and distance comparisons). The distribution $dist$ may similarly be chosen according to evidence or preference;} in this treatment, we use the Poisson distribution.

We shall emphasise at this stage that \bt{we make minimal further restrictions} on the set of allowed rearrangements $\cM$. Without loss of generality, we assume that $\cM$ generates the group $\SN$; this means that any permutation in $\SN$ may be obtained from any other by applying a sequence of elements from $\cM$. \bt{(Note that the case of $\cM$ not generating $\SN$ is simpler: in this case $\cM$ generates a subgroup, $H\subseteq\SN$, and the problem reduces to considering this smaller group, since any pair of elements would simply be unrelated under the model or both be elements of a coset $H\sigma$.)}  Elements of $\DN$ are, formally, allowed in the set of rearrangements, although their action does not actually alter the genome. This allows for full generality; for example, if one wishes to include `all inversions' in the model, then the inversion of a region of size $N-1$ is the same as flipping the genome over in space.

The first model condition simply states that the model should naturally possess the same symmetry as the genome (in the current case, dihedral symmetry).  Suppose, for example, that $(1,2)\in\cM$, meaning that the regions in positions $1$ and $2$ may swap places. Then, since the position labelling is arbitrary, we should have $(\ell,\ell+1)\in\cM$ for all $\ell = 1,2,\ldots,N-1$, and $(N,1)\in\cM$, meaning that  {\em any} two regions in adjacent positions may swap places; further, these rearrangements should all be equally probable. We refer to this property as {\em dihedral symmetry of the model} and, mathematically, express the condition as 
\begin{equation} 
\textrm{ for each } a\!\in\!\M \textrm{ and } d\!\in\! \DN , 
			\;dad^{-1} \!\in\!\M \textrm{ and } w(dad^{-1}) = w(a)\,.\tag{M1}\label{cond:m1}
\end{equation}

The second model condition ensures that the modelling is agnostic to the temporal direction of evolution. More precisely, the condition states that for any rearrangement that is allowed, its inverse is also allowed, with the same probability. That is, 
\begin{equation} 
\textrm{ for each } a\!\in\!\M , \;a^{-1} \!\in\!\M \textrm{ and } w(a^{-1}) = w(a)\,.\tag{M2}\label{cond:m2}
\end{equation}
We refer to this property as {\em rearrangement reversibility of the model}, or simply model reversibility. \bt{This condition is natural in the current group-based setting, where the typical rearrangements are reversals (which are self-inverse) and translocations (whose inverses are translocations). It is not essential for most of the construction, however does have some nice implications. For example,} when we interpret our model of evolution as a Markov process, in Section~\ref{subsec:markov}, we will show that (\ref{cond:m2}) is equivalent to the time reversibility of the Markov process. 

The evolutionary distance measure we consider in this paper is the maximum likelihood estimate of time elapsed (MLE). 
This is the maximum value of the likelihood function, which gives the probability, for any given time $T$, that the reference genome has evolved into the target genome in this amount of time. 
To be precise, for reference genome represented by the identity permutation $e\in\SN$ and target genome represented by $\sigma\in\SN$, the MLE is the maximum of the function $L(T|\sigma)$, where
\begin{align}
 L(T|\sigma) &:= P(\sigma|T)
     = \sum_{k=0}^{\infty} \Prob(e\!\mapsto\![\sigma] \textrm{ via } k \textrm{ events})
     														\Prob(k \textrm{ events in time } T)\,.\label{eq:simplelike}
\end{align}
Of course, the likelihood function need not have a maximum; this simply means that no evidence of an evolutionary relationship between the reference and the target under the given model can be discerned. This scenario, familiar from DNA sequence alignment paradigms such as the Jukes-Cantor correction \cite{felsenstein2004inferring}, was discussed in the context of genome rearrangement models in \cite{mles} and \cite{vjez_circ1}, where it was observed to occur in a substantial proportion of cases (independently of the chosen biological model).

For each $k$, the factor $\Prob(k \textrm{ events in time } T)$ in the likelihood expression is determined by the distribution $dist$. The first factor, 
$\Prob(e\!\mapsto\![\sigma] \textrm{ via } k \textrm{ events})$ is the {\em genome path probability}, which we shall denote by $\alpha_k(\sigma)$.
Since the target genome may be represented by any permutation from the set $[\sigma]$, `$e\mapsto[\sigma]$' is shorthand for ``the permutation $e$ is transformed into any permutation from the set $[\sigma]$'' and we calculate the genome path probability as a sum of {\em permutation path probabilities}, denoted by $\beta_k(\sigma)$. That is,
\[ \alpha_k(\sigma) := \Prob(e\!\mapsto\![\sigma] \textrm{ via } k \textrm{ events}) 
		= \sum_{d\in\DN} \Prob(e\!\mapsto\! d\sigma \textrm{ via } k \textrm{ events}) = \sum_{d\in\DN}\beta_k(d\sigma)\,.\]
Given a permutation $\sigma\in\SN$ and a model $(\cM,w,dist)$, we specify each permutation path probability $\beta_k(\sigma)$ by considering the set $\mypath(\sigma)$ of all $k$-length sequences of permutations, chosen from $\cM$, that transform $e$ into $\sigma$. 
Since we assume rearrangement events to be independent, the permutation path probability is then the sum of the probabilities of all such sequences, that is, 
\[ \beta_k(\sigma) = \!\!\sum_{(a_1,a_2,\ldots,a_k)\in\mypath(\sigma)} \!\!w(a_1)w(a_2)\ldots w(a_k)\,.\]

We note that permutation path probabilities vary for different elements of $[\sigma]$ (that is, in general, 
$\beta_k(\sigma)\neq \beta_k(d\sigma)$ for $\sigma\in\SN, d\in\DN$). However, {\em genome} path probabilities are of course constant on the cosets $[\sigma]$.
In fact, the model symmetry conditions (\ref{cond:m1}) and (\ref{cond:m2}) ensure that there are bigger classes of permutations that all have the same path probabilities (and thus likelihoods). The following results were established in \cite{mles} ($(i)$) and \cite{vjez_circ1} ($(ii)$ and $(iii)$).

\begin{theorem}\label{thm:equivs}
Let $(\cM,w,dist)$ be a full biological model for evolution. For all $k\in\NN_0$ and $\sigma\in\SN$, the following hold. 
\begin{enumerate}
\item[(i)]  $\alpha_k(\sigma_1)=\alpha_k(\sigma_2)$ for all $\sigma_1,\sigma_2\in [\sigma] = \{d\sigma : d\in\DN\}$.
\item[(ii)] If the model has the dihedral symmetry property (\ref{cond:m1}), then $\alpha_k(\sigma_1)=\alpha_k(\sigma_2)$ for all $\sigma_1,\sigma_2$ in the set
\[ [\sigma]_D := \{d_1\sigma d_2 : d_1,d_2\in\DN\}\,.\]
\item[(iii)]If the model has the dihedral symmetry property (\ref{cond:m1}) and the reversibility property (\ref{cond:m2}), then $\alpha_k(\sigma_1)=\alpha_k(\sigma_2)$ for all $\sigma_1,\sigma_2$ in the set
\[ [\sigma]_{DR} := \{ d_1 \sigma d_2\,, d_1\sigma^{-1}d_2 : d_1, d_2 \in \DN\}\,. \] 
\end{enumerate}
\end{theorem}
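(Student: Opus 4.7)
The plan is to prove the three parts in order, each one reducing to a weight-preserving bijection on the sets $\mypath(\cdot)$ of rearrangement sequences, because everything in $\alpha_k$ and $\beta_k$ is assembled from these path sets and the weights $w$.

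For part (i), I would not touch the paths at all: write $\alpha_k(\sigma_1) = \sum_{d'\in\DN}\beta_k(d'\sigma_1)$, substitute $\sigma_1 = d\sigma$, and note that $d'\mapsto d'd$ is a bijection of $\DN$, so the sum re-indexes to $\sum_{d''\in\DN}\beta_k(d''\sigma) = \alpha_k(\sigma)$. This uses only that $\DN$ is a group, with no model assumptions, which is why it holds in complete generality.

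For part (ii), by part (i) it suffices to compare $\alpha_k(\sigma)$ and $\alpha_k(\sigma d)$ for $d\in\DN$, since then $\alpha_k(d_1\sigma d_2) = \alpha_k(\sigma d_2) = \alpha_k(\sigma)$. The key step is to show $\beta_k(d'\sigma d) = \beta_k(dd'\sigma)$ for all $d,d'\in\DN$. Given $(a_1,\ldots,a_k)\in\mypath(d'\sigma d)$, I would map it to the conjugated sequence $(da_1d^{-1},\ldots,da_kd^{-1})$. By (\ref{cond:m1}) each entry is again in $\cM$ with the same weight, the map is self-inverse hence bijective, and the new product telescopes to $d(a_k\cdots a_1)d^{-1} = d(d'\sigma d)d^{-1} = dd'\sigma$, placing the image in $\mypath(dd'\sigma)$. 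Summing the resulting identity over $d'\in\DN$ and re-indexing $d'\mapsto d^{-1}d''$ gives $\alpha_k(\sigma d) = \alpha_k(\sigma)$.

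For part (iii), by (ii) it remains to show $\alpha_k(\sigma^{-1}) = \alpha_k(\sigma)$. I would use path reversal: given $(a_1,\ldots,a_k)\in\mypath(d\sigma^{-1})$, send it to $(a_k^{-1},\ldots,a_1^{-1})$, which by (\ref{cond:m2}) lies in $\cM^k$ with the same weight and satisfies $a_1^{-1}\cdots a_k^{-1} = (d\sigma^{-1})^{-1} = \sigma d^{-1}$. This bijection yields $\beta_k(d\sigma^{-1}) = \beta_k(\sigma d^{-1})$. Summing over $d$, applying the already-established identity $\beta_k(\sigma d^{-1}) = \beta_k(d^{-1}\sigma)$ from the $d'=e$ instance in part (ii), and re-indexing $d\mapsto (d')^{-1}$ delivers $\alpha_k(\sigma^{-1}) = \sum_{d'\in\DN}\beta_k(d'\sigma) = \alpha_k(\sigma)$.

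The only mildly delicate point is bookkeeping of left versus right multiplication: because rearrangements act on the left of permutations, the symmetry class $[\sigma]_D$ is defined by two-sided action, and the conjugation bijection in (ii) is exactly what converts a right action into a left action at the level of $\beta_k$. Once that bridge is built, (iii) is just the standard path-reversal argument for a reversible Markov chain, with (M2) playing the role of detailed balance on the generating set $\cM$.
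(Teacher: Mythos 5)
Your argument is correct and complete. Note that the paper itself gives no proof of this theorem: it is stated with a \verb|\qed| and attributed to the references \cite{mles} (for part (i)) and \cite{vjez_circ1} (for parts (ii) and (iii)), so there is no in-paper proof to compare against. Your route --- reducing each part to a weight-preserving bijection on the path sets $\mypath(\cdot)$ (re-indexing of the sum over $\DN$ for (i), conjugation of rearrangement sequences by $d\in\DN$ for (ii), and reversal-plus-inversion of sequences for (iii)) --- is the natural combinatorial argument, and each step checks out: the conjugated product correctly telescopes to $d(a_k\cdots a_1)d^{-1}$, the reversed sequence multiplies to $(d\sigma^{-1})^{-1}=\sigma d^{-1}$ under the paper's convention $a_k\cdots a_1 e=\sigma$, and conditions (\ref{cond:m1}) and (\ref{cond:m2}) are invoked exactly where they are needed to keep the image sequences in $\cM^k$ with unchanged weights. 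Your closing observation that (ii) is what converts the right $\DN$-action into a left one at the level of $\beta_k$, and that (iii) then reduces to (i) and (ii), is precisely the right bookkeeping; the proof is self-contained and consistent with the algebraic identities the paper later derives (e.g.\ the invariance of $\chi_{\reg}(\zee\sigma^{-1}\zee\ess^k)$ under $\sigma\mapsto d_1\sigma d_2$ and $\sigma\mapsto\sigma^{-1}$).
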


It was shown in \cite{jezandpet} that the combinatorial problem of calculating path probabilities may be converted into a linear algebra problem via the representation theory of the symmetric group algebra, $\CC[\SN]$. For full details in the general model setting, we refer the reader to \cite{vjez_circ1}. We recall the essential steps in the derivation here, since we shall undertake a similar procedure in a lower dimensional algebra in the next section. 

Here, we use the term {\em algebra} to mean a vector space equipped with a bilinear product. In particular, we require the group algebra $\CC[\SN]$ consisting of all formal linear combinations of elements of the group $\SN$; this algebra has natural basis $\SN$, and thus dimension $N!$. For detailed background on the symmetric group algebra, and algebras more generally, we refer the reader to \cite{sagan} and \cite{etingof_book} respectively.
The following group algebra elements are key to our calculations.

\begin{definition}\label{def:mod-sym-els}
Let $(\cM,w,dist)$ be a biological model for evolution of genomes with $N$ regions. We define the {\em model element}, $\ess$, and the {\em symmetry element}, $\zee$, of the group algebra $\CC[\SN]$ by
\[ \ess:=\ds\sum_{a\in\mathcal{M}} w(a)\,a\,  \quad\quad \mathrm{ and } \quad\quad \zee:= \tfrac{1}{2N}\sum_{d\in \DN} d \,.\]
\end{definition}

To reformulate the path probabilities, we firstly observe that 
\begin{equation}\label{eq:s^k} 
\ess^k=\ds\sum_{\tau\in\SN} \beta_k(\tau)\,\tau\,.
\end{equation}
Then, for $\sigma\in\SN$, we multiply (\ref{eq:s^k}) on the left by $\sigma^{-1}$ to see that $\beta_k(\sigma)$ is the coefficient of $e$ in the expansion of $\sigma^{-1}\ess^k$. The representation theory of the symmetric group algebra tells us that this is exactly ($\frac{1}{N!}$ times) the trace of the regular representation of $\sigma^{-1}\ess^k$. That is,
\begin{equation}\label{eq:beta-k} \beta_k(\sigma) = \tfrac{1}{N!}\chi_{\reg}(\sigma^{-1}\ess^k)\,.\end{equation}
Thus, for $\sigma\in\SN$, the $k$th genome path probability is 
\begin{align}\label{eq:alpha_k-SN}
\alpha_k(\sigma) = \sum_{d\in \DN}\!\! \beta_k(d\sigma) &=  \tfrac{1}{N!}\!\sum_{d\in \DN}\!\!\chi_{\reg}(\sigma^{-1}\!d\ess^k)
 =  \tfrac{2N}{N!} \chi_{\reg}(\sigma^{-1}\zee\ess^k) 
 = \tfrac{2N}{N!} \sum_{p\vdash N} \! D_p \chi_{p}(\sigma^{-1}\zee\ess^k)\,,\end{align}
where we have used the linearity of the characters to incorporate the symmetry element $\zee$. \bt{The final equality is gained by decomposing the regular representation of $\CC[\SN]$ into irreducible representations. Recall that the irreducible representations of $\CC[\SN]$ correspond to the integer partitions of $N$ \cite[Prop. 1.10.1]{sagan}; here, we denote a partition of $N$ by $p\vdash N$ and index the representations and related objects accordingly. Specifically, for each partition $p\vdash N$, $\rho_p$ is the irreducible representation corresponding to $p$,  $D_p$ is its multiplicity (and dimension), and $\chi_p$ is the character of this representation.} 

The above derivation of the permutation path probabilities follows that of \cite{jezandpet}; in that paper it was also noted that an alternative derivation is possible via the theory of the Fourier transform on $\SN$. That is, one may extend the probability distribution $w$ on $\cM$ to $w'$ on the whole of $\SN$, notice that the Fourier transform of $w'$ with respect to an irreducible representation $\rho_p$ is equal to $\rho_p(\ess)$ and that $w'$ convolved with itself $k$ times is exactly the function $\beta_k$ on $\SN$, and then apply the Fourier inversion formula to obtain (\ref{eq:beta-k}).

Now, for a model with rearrangement reversibility (\ref{cond:m2}), the irreducible representations of the model element $\ess$ are diagonalisable \cite{vjez_circ1} and we obtain
\[ \begin{split}
\alpha_k(\sigma)
&= \tfrac{2N}{N!}\sum_{p\vdash N} D_p\sum_{i=1}^{r_p} (\lambda_{p,i})^k \tr(\rho_p(\sigma^{-1}\zee) E_{p,i})\,,
\end{split} 
\]
where, for each $p$, the eigenvalues of $\rho_p(\ess)$ are $\{\lambda_{p,i} : i = 1,\ldots, r_p\}$ and, for each $p$ and $i$, $E_{p,i}$ is the projection onto the eigenspace of $\lambda_{p,i}$.
Substituting this into the likelihood expression (\ref{eq:simplelike}) and setting the distribution of events in time to be $dist = \Pois(1)$, we obtain
\begin{align}
L(T|\sigma) 
&= e^{-T}\, \tfrac{2N}{N!}\sum_{p\vdash N} D_p\sum_{i=1}^{r_p}\tr(\rho_p(\sigma^{-1}\zee) E_{p,i}) e^{\lambda_{p,i}T}\,,\label{eq:perms-like}
\end{align}
where we have observed that the expression is in fact a power series and, accordingly, have been able to eliminate the infinite sum from the expression. 

We note that, for a given model, one need only calculate the eigenvalues of each $\rho_p(\ess)$ once. Thus the bulk of the calculation burden is now in calculating the {\em partial traces}, that is, for any given genome, the set 
$\left\{\tr(\rho_p(\sigma^{-1}\zee) E_{p,i}): p\vdash N, i = 1\ldots r_p\right\}$
of coefficients that correspond to the distinct eigenvalues in the likelihood equation. 

In implementing the likelihood calculations using the expression (\ref{eq:perms-like}), we observed that for all genomes, {\em most} of these partial trace coefficients were zero \cite{vjez_circ1}. That is, most of our calculations ended up not contributing to the final likelihood function. In the next section, we explain the occurrence of these zeroes and show that the redundancy can be removed from the computations.

\section{The circular genome algebra}\label{sec:A}

The calculations outlined above are performed in the group algebra $\CC[\SN]$, where each permutation in $\SN$ is a distinct basis element. However, we (and, indeed, the computations) do not distinguish between different permutations that represent the same genome -- that is, between elements of each equivalence class
\[ [\sigma] = \{ d\sigma : d\in\DN\}\,,\]
for $\sigma\in\SN$. We now construct a lower-dimensional algebra by combining these equivalent permutations together to form basis elements -- {\em permutation clouds} -- that correspond to circular genomes.  Until otherwise stated, assume that we have fixed a number of regions $N$ and a biological model for evolution $(\cM,w,dist)$ that has dihedral symmetry and is reversible, that is, satisfies (\ref{cond:m1}) and (\ref{cond:m2}). 

\begin{definition}
For symmetry element $\zee \in\CC[\SN]$, the {\em circular genome algebra} for $N$ regions is 
\[ \cA := \zee\CC[\SN] = \{\zee\bm{\tau} : \bm{\tau}\in\CC[\SN]\}\,. \]
Any element of $\cA$ of the form $\zee\sigma$, where $\sigma\in\SN$, is called a {\em permutation cloud}.
\end{definition}

One easily verifies that $\cA$ is a subalgebra of $\CC[\SN]$. To see that $\cA$ has a natural basis that is in correspondence with the set of genomes, firstly observe that any element of $\cA$ can be written as a linear combination of permutation clouds $\zee\sigma$, for $\sigma\in\SN$. Thus there exists a basis for $\cA$ of the form 
$\{ \zee\sigma_1 , \ldots , \zee\sigma_{K} \}$, for $\sigma_i\in\SN$.
Now, 
\[ \zee\sigma_i = \tfrac{1}{2N}\sum_{d\in\DN} d\sigma_i \,, \]
so that each basis element is a weighted sum of elements from a set $[\sigma_i]$, representing a particular genome. Since the sets are equivalence classes, for any $\sigma_1,\sigma_2\in\SN$ we have
\[ \zee \sigma_1 = \zee\sigma_2 \;\iff\; \sigma_1,\sigma_2\in [\sigma] \textrm{ for some } \sigma\in\SN\,.\]
This means that the set of distinct permutation clouds corresponds to the set of distinct genomes, and these form a basis for $\cA$. Finally, noting that for all $\sigma\in\SN$, $\big|[\sigma]\big| = 2N$, we see that
\begin{equation}
\dim(\cA) = \big|\{ [\sigma] : \sigma\in\SN \}\big| = \tfrac{N!}{2N} =: K\,.
\end{equation}
For the remainder of this section, we fix a basis for $\cA$,
\begin{equation}\label{eq:basis-A} B:= \{ \zee\sigma : \sigma\in\SN\} = \{ \zee\sigma_1 , \ldots , \zee\sigma_{K} \} \,,
\end{equation}
where we have chosen a representative $\sigma_i \in\SN$ of each equivalence class $[\sigma_i]$ for notational convenience. We set the first basis element to correspond to $[e] = \DN$, so that $\zee\sigma_1 = \zee$. 
Having used the symmetry of the genomes to construct the algebra, we now incorporate the symmetry of the model to extract some useful properties. 

\begin{proposition}\label{prop:comm.idem}
The model and symmetry elements, $\ess, \zee\in \CC[\SN]$, have the following properties. 
\begin{enumerate}
\item[(i)] $\zee$ is idempotent;
\item[(ii)] $\ess$ and $\zee$ commute.
\end{enumerate}
\end{proposition}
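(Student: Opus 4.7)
The plan is to handle each part by a direct computation in $\CC[\SN]$, using only that $\DN$ is a group (for (i)) and the dihedral symmetry condition (\ref{cond:m1}) on the model (for (ii)).

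For part (i), I would expand
\[
\zee^2 = \tfrac{1}{(2N)^2}\sum_{d_1,d_2\in\DN} d_1 d_2\,,
\]
and observe that for each fixed $d_1\in\DN$, the map $d_2\mapsto d_1 d_2$ is a bijection of $\DN$. Hence the inner sum over $d_2$ equals $\sum_{d\in\DN} d$ regardless of $d_1$, and pulling out the factor $|\DN|=2N$ collapses $\zee^2$ back to $\zee$. This is a one-line calculation once the bijection is noted.

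For part (ii), the natural route is to show the stronger statement that $\ess$ commutes with every individual element of $\DN$, from which commutation with $\zee=\frac{1}{2N}\sum_{d\in\DN}d$ follows by linearity. Fix $d\in\DN$ and consider
\[
d\ess d^{-1} = \sum_{a\in\cM} w(a)\,dad^{-1}\,.
\]
By condition (\ref{cond:m1}), the map $a\mapsto dad^{-1}$ is a bijection of $\cM$ and preserves the weights $w$, so reindexing by $a' = dad^{-1}$ gives $d\ess d^{-1} = \sum_{a'\in\cM} w(a')\,a' = \ess$, i.e.\ $d\ess = \ess d$. Averaging over $d\in\DN$ then yields $\zee\ess = \ess\zee$, as required.

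I do not expect a genuine obstacle in either part: the only subtlety is recognising that in (ii) it is the conjugation form of (\ref{cond:m1}) (not just invariance of $\cM$ under left multiplication by $\DN$) that is doing the work, and this is precisely why the model symmetry condition was formulated that way in the discussion preceding (\ref{cond:m1}).
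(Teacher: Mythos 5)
Your proof is correct and follows essentially the same route as the paper: part (i) is the identical reindexing argument, and part (ii) rests on the same use of condition (\ref{cond:m1}), which the paper implements by writing $\cM$ as conjugacy orbits of base rearrangements and reindexing the sum over $\DN$, while you equivalently observe that $a\mapsto dad^{-1}$ is a weight-preserving bijection of $\cM$ so that $d\ess d^{-1}=\ess$ for each $d\in\DN$. Your formulation yields the marginally stronger pointwise statement $d\ess=\ess d$, but the underlying idea is the same.
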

\begin{proof}
(i) Since $\DN$ is a group, we have
\[ \zee^2 = \tfrac{1}{(2N)^2}\sum_{d\in\DN}\sum_{f\in\DN} d\, f = \tfrac{1}{(2N)^2}\sum_{d\in\DN}\sum_{f\in\DN} f
			= \tfrac{1}{2N}\sum_{f\in\DN} f = \zee\,. \]
(ii) Now we use the dihedral symmetry (\ref{cond:m1}) of the model to rewrite the model as $m$ {\em base rearrangements}, $a_1,\ldots, a_m\in\SN$, along with their  symmetries. That is, 
\begin{equation}\label{eq:sym-model}
 \cM = \{  da_1 d^{-1}, da_2 d^{-1}, \ldots , da_m d^{-1} : d\in\DN\}\,.
\end{equation}
Then, using the same idea as in (i),
\begin{align*}
\zee\ess 		&= \tfrac{1}{2N}\sum_{f\in\DN}\sum_{i=1}^m \sum_{d\in\DN} w(a_i) f d a_i d^{-1} 
			= \tfrac{1}{2N}\sum_{f\in\DN}\sum_{i=1}^m \sum_{d\in\DN} w(a_i)  d a_i d^{-1} f
			= \ess\zee\,. \qedhere
\end{align*}
\end{proof}

The above properties translate immediately into properties of the representations of $\zee$ and $\ess$.

\begin{corollary}\label{cor:zee-ess}
Let $p\vdash N$ and $\rho_p:\CC[\SN]\to M_{D_p}(\CC)$ denote the corresponding irreducible representation of the symmetric group algebra. Then
\begin{enumerate}
\item[(i)] the only eigenvalues of $\rho_p(\zee)$ are $0$ and $1$;
\item[(ii)] $\rho_p(\zee)$ and $\rho_p(\ess)$ are simultaneously diagonalisable, with real eigenvectors. 
\end{enumerate}
\end{corollary}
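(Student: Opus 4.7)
The plan is to deduce both statements directly from Proposition~\ref{prop:comm.idem}, together with the already-cited fact that $\rho_p(\ess)$ is diagonalisable whenever the model is reversible.

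For (i), I would apply the algebra homomorphism $\rho_p$ to the identity $\zee^2 = \zee$ to obtain $\rho_p(\zee)^2 = \rho_p(\zee)$. Thus the minimal polynomial of $\rho_p(\zee)$ divides $x(x-1)$, so every eigenvalue of $\rho_p(\zee)$ lies in $\{0,1\}$. A minor bonus that will be useful in (ii) is that this same identity forces $\rho_p(\zee)$ to be diagonalisable, since its minimal polynomial is squarefree.

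For (ii), I would combine three ingredients: (a) $\rho_p(\zee)$ is diagonalisable by the observation just made, (b) $\rho_p(\ess)$ is diagonalisable under (\ref{cond:m2}), as recalled from \cite{vjez_circ1} in the preceding section, and (c) $\rho_p(\zee)$ and $\rho_p(\ess)$ commute, which follows from applying $\rho_p$ to the identity $\zee\ess = \ess\zee$ in Proposition~\ref{prop:comm.idem}(ii). The standard linear-algebra fact that two commuting diagonalisable operators can be simultaneously diagonalised then yields a common eigenbasis.

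The remaining point is that the common eigenbasis can be chosen to be real. For this, I would use the well-known fact that every irreducible representation of $\SN$ can be realised over $\RR$ (even over $\QQ$), and moreover chosen to be orthogonal, so that $\rho_p(\sigma^{-1}) = \rho_p(\sigma)^T$ for every $\sigma\in\SN$. Extending this conjugate-linearly, both of the key elements are self-adjoint in such a realisation: for $\zee$ because $\DN$ is closed under inversion, and for $\ess$ because of the reversibility condition (\ref{cond:m2}), which gives $w(a^{-1}) = w(a)$. Two commuting real symmetric matrices admit a common orthonormal basis of real eigenvectors, giving the claim. The main subtlety here is not any individual step but simply being explicit about which realisation of $\rho_p$ is used so that the self-adjointness computation goes through; everything else is routine.
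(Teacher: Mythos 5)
Your proposal is correct and follows essentially the same route as the paper: part (i) from idempotency of $\zee$, and part (ii) by choosing $\rho_p$ orthogonal so that both $\rho_p(\zee)$ (via closure of $\DN$ under inversion) and $\rho_p(\ess)$ (via (\ref{cond:m2})) are real symmetric commuting matrices, hence simultaneously diagonalisable over $\RR$. The paper even makes your ``bonus'' remark explicitly after the corollary --- idempotency alone gives diagonalisability of $\rho_p(\zee)$ but not realness of the eigenvectors, which is why the symmetric realisation is the essential step, exactly as you identified.
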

\begin{proof}
Claim (i) is immediate since $\zee$, and thus $\rho_p(\zee)$, is idempotent. To show (ii), we firstly choose the representation $\rho_p$ to be orthogonal on $\SN$ \cite{sagan}. Then the rearrangement reversibility of the model ensures that $\rho_p(\ess)$ is symmetric \cite{vjez_circ1} and, similarly, one may verify directly that $\rho_p(\zee)^{\mathsf{T}}=\rho_p(\zee)$.
Thus, since $\zee$ and $\ess$ commute, the representation matrices commute and are simultaneously diagonalisable. In particular, since these matrices are real symmetric, the orthonormal set of simultaneous eigenvectors may be chosen to be real. 
\end{proof}

Now fix $p\vdash N$ and choose a set of orthonormal vectors $\{v_1,v_2,\ldots,v_{D_p}\}\subseteq\RR^{D_p}$ that are eigenvectors for both $\rho_p(\zee)$ and $\rho_p(\ess)$,  
ordered so that the first $k_p$ of them are eigenvectors for the eigenvalue $1$ of $\rho_p(\zee)$.
Take an eigenvalue, $\lambda_{p,i}$ of $\rho_p(\ess)$ and let $J_i\subseteq \{1,2,\ldots,D_p\}$ such that $\{v_j :j\in J_i\}$ are the eigenvectors for $\lambda_{p,i}$. Then, for $\sigma\in\SN$, the partial trace for $\lambda_{p,i}$ may be written as
\begin{equation}\label{eq:partials}
 \tr(\rho_p(\sigma^{-1}\zee) E_{p,i}) = \sum_{j\in J_i} v_j^T \rho_p(\sigma^{-1}\zee) v_j \,,
\end{equation}
where, for each $j$, 
\begin{equation}\label{eq:knockout}
 v_j^T \rho_p(\sigma^{-1}\zee) v_j = v_j^T \rho_p(\sigma^{-1})\rho_p(\zee) v_j  
= \left\{ \begin{array}{ll}
v_j^T \rho_p(\sigma^{-1}) v_j 	&, \textrm{ if } j\leq k_p\,;\\
0								&, \textrm{ if } j > k_p \,.\end{array}\right.
\end{equation}
We see then that $\rho_p(\zee)$ ``knocks out'' parts of the partial traces; in particular, it does this independently of the genome. 
We shall establish shortly that, in total, $\tfrac{2N-1}{2N}$ of the partial traces are knocked out in this way, thus explaining the observation in \cite{vjez_circ1} that most of the calculated partial traces were zero. 

The key to performing MLE computations in the symmetric group algebra is the relationship between the character of the regular representation and the identity element, $e\in\CC[\SN]$: the character $\chi_{\reg}(\bm{\tau})$ counts occurrences of the identity in a generic element $\bm{\tau}\in\CC[\SN]$. Since $\zee$ is idempotent, it is a left identity (but not a right identity) in the algebra $\cA$. We now construct the regular representation, $\rho^{\cA}_{\reg}$, of $\cA$ and show that its character, the {\em regular character} $\chi_{\reg}^{\cA}$, functions in exactly this way for the left identity $\zee\in\cA$. 

We construct the regular representation of $\cA$ via the left action of elements of $\cA$ on the basis $B=\{ \zee\sigma_1 , \ldots , \zee\sigma_{K} \}$ fixed above (\ref{eq:basis-A}).  
We need only consider the representation of a generic basis element, $\zee\sigma$ for $\sigma\in\SN$, since one may extend linearly to all of $\cA$. For arbitrary $\sigma\in\SN$, the $ij$th entry of the matrix $\rho^{\cA}_{\reg}(\zee\sigma)$ is the coefficient of $\zee\sigma_i$ in the expansion of $(\zee\sigma)(\zee\sigma_j)$,
that is, 
\begin{equation}\label{eq:regrepA}
\big(\rho^{\cA}_{\reg}(\zee\sigma)\big)_{ij} = \tfrac{1}{2N}\,  \big|\{ d\in \DN : \sigma d \sigma_j \in [\sigma_i] \}\big| \,.
\end{equation}
One readily verifies that $\rho^{\cA}_{\reg}(\zee)$ is the $K\times K$ identity matrix and that $\rho^{\cA}_{\reg}(\zee\sigma)^T = \rho^{\cA}_{\reg}(\zee\sigma^{-1})$. The regular character $\chi_{\reg}^{\cA}$ is the trace of the regular representation matrix. For a generic basis element $\zee\sigma\in\cA$,
\begin{align}
\chi^{\cA}_{\reg}(\zee\sigma) 
= \sum_{i=1}^{K} (\rho^{\cA}_{\reg}(\zee\sigma )_{ii} 
&= \tfrac{1}{2N} \sum_{i=1}^{K} \big|\{ d_1\in \DN : \sigma d_1 \sigma_i \in [\sigma_i] \}\big| \nonumber\\
&= \tfrac{1}{2N} \sum_{i=1}^{K} \sum_{d_2\in \DN} \big|\{ d_1\in \DN : \sigma d_1 \sigma_i =d_2\sigma_i \}\big| \nonumber\\
&= \tfrac{1}{2N} \sum_{i=1}^{K} \sum_{d_2\in \DN} \big|\{ d_1\in \DN : \sigma d_1 =d_2 \}\big| \nonumber\\
&= \tfrac{1}{2N} K \sum_{d_2\in \DN} \big|\{ d_1\in \DN : d_1 =\sigma^{-1} d_2 \}\big| \nonumber\\
&= \left\{ \begin{array}{ll} 
K 	& \textrm{ if } \sigma\in \DN \,; \\
0				& \textrm{ if } \sigma\not\in \DN \,.\end{array}\right.\label{eq:regcharA}
\end{align} 
Since $\zee\sigma = \zee$ if and only if $\sigma\in \DN$, this shows that we can use the character of the regular representation of the algebra $\cA$ to track coefficients of the left identity $\zee$, just as we do for the identity $e$ in $\CC[\SN]$. Further, we can express the regular character of $\cA$ as a sum over the irreducible characters of $\CC[\SN]$, and thus see that the regular characters of $\cA$ and $\CC[\SN]$ coincide on $\cA$.

\begin{proposition}\label{prop:regrepA}
For arbitrary $\bm{\tau}\in\cA$,
\begin{enumerate}
\item[(i)]  $\tfrac{1}{K}\,\chi^{\cA}_{\reg}(\bm{\tau})$ is the coefficient of $\zee$ in $\bm{\tau}\,;$
\item[(ii)] $\chi^{\cA}_{\reg}(\bm{\tau}) =\ds\sum_{p\vdash N} \chi_p(e) \chi_p(\bm{\tau}) 
									= \ds\sum_{p\vdash N} D_p \chi_p(\bm{\tau})=\chi_{\reg}(\bm{\tau})\,.$
\end{enumerate}
\end{proposition}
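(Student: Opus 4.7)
The plan is to prove both parts by directly expanding a generic element $\bm{\tau}\in\cA$ in the basis $B$ fixed in (\ref{eq:basis-A}). Write
\[ \bm{\tau} = \sum_{i=1}^{K} c_i\, \zee\sigma_i\,, \]
where $\sigma_1 = e$ so that $\zee\sigma_1 = \zee$. Part (i) should then fall out immediately from the character computation (\ref{eq:regcharA}): by linearity of the trace, $\chi^{\cA}_{\reg}(\bm{\tau}) = \sum_i c_i\,\chi^{\cA}_{\reg}(\zee\sigma_i)$, and (\ref{eq:regcharA}) tells us that $\chi^{\cA}_{\reg}(\zee\sigma_i)$ is $K$ when $\sigma_i\in\DN$ (i.e.\ for $i=1$) and $0$ otherwise. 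Hence $\chi^{\cA}_{\reg}(\bm{\tau}) = K c_1$, which is exactly what is claimed.

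For part (ii), the middle equality is standard: $\chi_p(e) = D_p$ is the dimension of $\rho_p$, and the decomposition $\chi_{\reg} = \sum_{p\vdash N} D_p\chi_p$ is the usual multiplicity-equals-dimension statement for the regular representation of $\CC[\SN]$. The genuinely new content is therefore the final equality $\chi^{\cA}_{\reg}(\bm{\tau}) = \chi_{\reg}(\bm{\tau})$ on elements of the subalgebra $\cA$. My plan is to establish it by showing that both characters evaluate to $Kc_1$. For the left-hand side this is just part (i). For the right-hand side I use the analogous elementary fact in $\CC[\SN]$: the regular character satisfies $\chi_{\reg}(\bm{\tau}) = N!\cdot(\text{coefficient of } e \text{ in } \bm{\tau})$. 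Expanding
\[ \bm{\tau} = \sum_{i=1}^{K} c_i\,\zee\sigma_i = \tfrac{1}{2N}\sum_{i=1}^{K} c_i\sum_{d\in\DN} d\sigma_i\,, \]
the identity $e$ appears only when $d\sigma_i = e$, i.e.\ when $\sigma_i\in\DN$; by our choice of coset representatives this occurs only for $i=1$ (and contributes coefficient $\tfrac{1}{2N}$). So the coefficient of $e$ in $\bm{\tau}$ is $\tfrac{c_1}{2N}$, and $\chi_{\reg}(\bm{\tau}) = N!\cdot\tfrac{c_1}{2N} = \tfrac{N!}{2N}\,c_1 = Kc_1$, matching $\chi^{\cA}_{\reg}(\bm{\tau})$.

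There is no real obstacle here --- the whole argument is a bookkeeping exercise, and the key structural input (namely that the character $\chi^{\cA}_{\reg}$ behaves like $\chi_{\reg}$ in $\CC[\SN]$ but with $K$ playing the role of $N!$ and $\zee$ playing the role of $e$) has already been packaged in (\ref{eq:regcharA}). The only mild subtlety is being careful that in the expansion of $\bm{\tau}$ no nontrivial element of $\DN$ besides the trivial one contributes the identity, which is guaranteed by the fact that the $\sigma_i$ are chosen to lie in distinct cosets of $\DN\backslash\SN$.
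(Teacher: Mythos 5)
Your proof is correct, and part (i) is the same argument as the paper's. For part (ii), however, you take a genuinely different (and more elementary) route to the key equality. The paper proves $\chi^{\cA}_{\reg}(\zee\sigma)=\sum_{p\vdash N}\chi_p(e)\chi_p(\zee\sigma)$ head-on by expanding $\chi_p(\zee\sigma)=\tfrac{1}{2N}\sum_{d\in\DN}\overline{\chi_p(\sigma^{-1}d)}$ and invoking the dual (column) orthogonality relations $\sum_{p\vdash N}\chi_p(e)\overline{\chi_p(\tau)}=\delta(e,\tau)\,|\mathsf{cent}_{\SN}(e)|$, then matching the result against the explicit formula (\ref{eq:regcharA}); the equality with $\chi_{\reg}$ is then the standard regular-character decomposition. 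You instead establish the outer equality $\chi^{\cA}_{\reg}(\bm{\tau})=\chi_{\reg}(\bm{\tau})$ directly, by showing both sides equal $Kc_1$: the left side from part (i), the right side by tracking the coefficient of $e$ in the $\SN$-expansion of $\bm{\tau}=\tfrac{1}{2N}\sum_i c_i\sum_{d\in\DN}d\sigma_i$ and using $\chi_{\reg}(\bm{\tau})=N!\cdot(\text{coefficient of }e)$, a fact already recorded in the paper at (\ref{eq:beta-k}). The two arguments are close in substance --- the orthogonality relation specialised at $\sigma=e$ is precisely the statement that $\tfrac{1}{N!}\chi_{\reg}$ detects the identity coefficient --- but yours avoids the character-table machinery entirely and makes the parallel between ``$\zee$ in $\cA$'' and ``$e$ in $\CC[\SN]$'' transparent. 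What the paper's heavier formulation buys is portability: the same computation with $\delta(\cdot,\cdot)$ and centralisers is reused verbatim for an arbitrary finite group $G$ and subgroup $Z$ in Section~\ref{sec:gen} (see the footnote about keeping the general complex-conjugate form), whereas your coefficient-counting argument also generalises but would need to be restated there. One small point of care in your write-up, which you do address correctly: the only contribution to the coefficient of $e$ comes from $i=1$ and the single $d\in\DN$ with $d\sigma_1=e$, since $d\sigma_i=e$ forces $\sigma_i\in\DN$ and the $\sigma_i$ lie in distinct cosets.
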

\begin{proof}
(i) Given $\bm{\tau}\in\cA$ and the basis $B$ from (\ref{eq:basis-A}), there exist $c_1,\ldots,c_K\in\CC$ such that 
$\bm{\tau} = c_1\zee  + c_2 \zee\sigma_2 + \ldots + c_K \zee\sigma_K$. Then
\[ \chi^{\cA}_{\reg}(\bm{\tau}) = c_1\chi^{\cA}_{\reg}(\zee) + \sum_{i=2}^K c_i\chi^{\cA}_{\reg}(\zee\sigma_i)
										= c_1 \, K \,,\]
from (\ref{eq:regcharA}).

(ii) It suffices to consider a generic basis element $\zee\sigma\in\cA$, since the characters are linear. We shall apply the dual orthogonality relations on the irreducible characters of $\SN$ (see for example \cite[Thm. 16.4]{James_Liebeck}), given for $\sigma,\tau\in \SN$ by
\begin{equation}\label{eq:orthogrels}
\sum_{p \vdash N} \chi_{p}(\sigma)\overline{\chi_p(\tau)} = \delta(\sigma,\tau) \left|\mathsf{cent}_{\SN}(\sigma)\right|\,,
\end{equation}
where $\mathsf{cent}_{\SN}(\sigma):=\{\gamma\in\SN : \gamma\sigma=\sigma\gamma\}$ is the centraliser of $\sigma$ and the map $\delta:\SN\to \{0,1\}$ is defined by
\begin{equation}\label{eq:delta equiv}
\delta(\sigma,\tau) = \left\{ \begin{array}{ll} 
1\,, & \textrm{ if } \sigma, \tau \textrm{ are in the same conjugacy class in } \SN ;\\
0\,, & \textrm{ otherwise}\,. \end{array}\right. 
\end{equation}
Recall that for $p\dashv N$ and $\tau\in\SN$, $\overline{\chi_p(\tau)} = \chi_p(\tau^{-1})$.\footnote{Recall that the irreducible characters of $\SN$ are all real valued, however we use the general form of the result here, since we apply the same argument to an arbitrary group in Section~\ref{sec:gen}.} Then for $\sigma\in\SN$, we have
\begin{align*}
\sum_{p\vdash N} \chi_p(e) \chi_p(\zee\sigma) 
	&=\tfrac{1}{2N}\sum_{d\in \DN}\sum_{p\vdash N} \chi_p(e) \overline{\chi_p(\sigma^{-1}d)} \\
	&=\tfrac{1}{2N}\sum_{d\in \DN}\delta(e,\sigma^{-1}d)\, \big|\mathsf{cent}_{\SN}(e) \big| \\
	&= \left\{ \begin{array}{ll} 
	\tfrac{N!}{2N} 	& \textrm{ if } \sigma\in \DN \,; \\
	0				& \textrm{ if } \sigma\not\in \DN \,.\end{array}\right. \\
	&= \chi_{\reg}^{\cA}(\zee\sigma)\,,
\end{align*}
by (\ref{eq:regcharA}), recalling that $K = \tfrac{N!}{2N}$.
\end{proof}

An immediate consequence of the above is an expression for the dimension of $\cA$ in terms of the characters $\chi_p$ of $\CC[\SN]$:
\begin{equation}\label{eq:dimA} 
\dim(\cA) = K = \tfrac{N!}{2N} = \sum_{p\vdash N} \chi_p(\e)\chi_p(\zee) = \sum_{p\vdash N} D_p k_p\,,
\end{equation}
where, for each $p\dashv N$, $k_p$ is the multiplicity of the eigenvalue $1$ of $\rho_p(\zee)$.
The first part of this can also be seen directly from the dual orthogonality relations.

To perform the MLE calculations in the algebra $\cA$ efficiently, we will need a decomposition of the regular character in terms of irreducible characters of $\cA$. 
Firstly, we'll observe that irreducible submodules of $\cA = \zee\CC[\SN]$ can be produced by acting with $\zee$ on the irreducible submodules of $\CC[\SN]$. This is straightforward and, in fact, true in a more general context (see, for example, \cite[Lemma 4.15]{steinberg_book}), however we include the details here since we'll use them in our subsequent constructions.
We denote the irreducible submodules of $\CC[\SN]$ by $V_p= \CC^{D_p}$ for $p\dashv N$, that is, we write 
\begin{equation}\label{eq:groupalg_decomp}
 \CC[\SN] \cong \bigoplus_{p\vdash N} D_p V_p\,. 
\end{equation}

\begin{theorem}\label{thm:Adecomp}
The non-trivial modules gained by acting with the symmetry element $\zee$ on the irreducible submodules of $\CC[\SN]$ are irreducible modules of the genome algebra $\cA$. 
\end{theorem}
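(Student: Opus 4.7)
The plan is to transfer the irreducibility of each $V_p$ as a $\CC[\SN]$-module down to $\zee V_p$ as an $\cA$-module, exploiting only that $\zee \in \CC[\SN]$ and that each $V_p$ is closed under left multiplication. I would realise the irreducible components $V_p$ concretely as minimal left ideals inside $\CC[\SN]$, as in the Wedderburn decomposition (\ref{eq:groupalg_decomp}), so that the $\CC[\SN]$-action on $V_p$ is just left multiplication by elements of $\CC[\SN]$.

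The first step is to check that $\zee V_p$ is an $\cA$-submodule of $\cA$. Since $\zee V_p \subseteq \zee\CC[\SN] = \cA$, only closure under the action needs verification: for any $\zee y \in \cA$ and $\zee v \in \zee V_p$,
\[(\zee y)(\zee v) = \zee(y \zee v) \in \zee V_p,\]
where the inclusion uses that $V_p$ is a left $\CC[\SN]$-submodule, so $\zee v \in V_p$ and hence $y \zee v \in V_p$. For irreducibility, assuming $\zee V_p \neq 0$, take any nonzero $\cA$-submodule $W \subseteq \zee V_p$ and pick some $0 \neq \zee v_0 \in W$. Then $\zee v_0$ is a nonzero element of the $\CC[\SN]$-irreducible module $V_p$, so $\CC[\SN]\cdot \zee v_0 = V_p$. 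For any target $\zee v \in \zee V_p$ there exists $y \in \CC[\SN]$ with $y \zee v_0 = v$, and therefore $(\zee y)(\zee v_0) = \zee(y \zee v_0) = \zee v \in W$. Hence $W \supseteq \zee V_p$, forcing equality.

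I do not expect a genuine obstacle, only a minor subtlety to guard against: the map $v \mapsto \zee v$ from $V_p$ onto $\zee V_p$ is in general \emph{not} a $\CC[\SN]$-module map, since $\zee$ need not be central, so one does not obtain a direct $\CC[\SN]$-module isomorphism $V_p \cong \zee V_p$. The argument above sidesteps this by never lifting the $\cA$-action back to $\CC[\SN]$; instead the identity $(\zee y)(\zee v) = \zee(y\zee v)$ reduces every statement about the $\cA$-action on $\zee V_p$ to a statement about left multiplication inside $V_p$, which is exactly the setting in which the irreducibility hypothesis applies.
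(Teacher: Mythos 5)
Your proof is correct and follows essentially the same route as the paper: both arguments take a nonzero element of an arbitrary $\cA$-submodule of $\zee V_p$, use irreducibility of $V_p$ over $\CC[\SN]$ to generate all of $V_p$ from it, and then apply $\zee$ to conclude the submodule is all of $\zee V_p$. The only cosmetic difference is that the paper phrases this in terms of an explicit orthonormal eigenvector basis for $\rho_p(\zee)$ (which it also uses to identify $\dim W_p = k_p$), whereas you work abstractly with minimal left ideals; your closing remark about $v\mapsto\zee v$ not being a $\CC[\SN]$-module map is a fair caution but does not affect either argument.
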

\begin{proof}
Let $p\dashv N$. As above, we may take a set $\{  v_1, v_2 , \ldots ,  v_{D_p}\}$ of (real, orthonormal, linearly independent) eigenvectors for both $\rho_p(\ess)$ and $\rho_p(\zee)$, ordered such that the first $k_p$ of them correspond to the eigenvalue $1$ of $\rho_p(\zee)$. Then $V_p =  \mathsf{span}_{\CC}\{ v_i : i = 1, \ldots , D_p\}$ and 
\begin{equation}\label{eq:Wpmodules} 
W_p:= \zee\cdot V_p = \mathsf{span}_{\CC}\{ \rho(\zee) v_i : i = 1, \ldots , D_p\} = \mathsf{span}_{\CC}\{ v_i : i = 1, \ldots , k_p\}\,.
\end{equation}
It is clear that $W_p$ is an $\cA$-module; we need show that it is either $\{0\}$ or irreducible. Suppose that there exists $U_p\subseteq W_p$ such that $U_p$ is an $\cA$-module, and $0\neq u\in U_p$. 
Then
\[ U_p \;  \supseteq \; \mathsf{span}_{\CC}\{ \rho_p(\zee\sigma)u : \sigma\in\SN\}
= \zee\cdot \mathsf{span}_{\CC}\{\rho_p(\sigma)u : \sigma\in\SN \} 
= \zee \cdot V_p 
= W_p \,, 
 \]
so that $U_p = W_p$.
\end{proof}
 
It was shown in \cite{vjez_circ1} that there exist $p\dashv N$ for all $N>3$ such that $\chi_p(\zee) =k_p = 0$; that is, there are always some $\CC[\SN]$-modules that are projected down to zero in $\cA$. We note that this is not true in the more general case considered in Section~\ref{sec:gen} (for example if $\zee$ is constructed from a different symmetry group). 

Now, the dimension expression (\ref{eq:dimA}) suggests that we will not be able to decompose the algebra $\cA$ into a direct sum of irreducible submodules as we can for $\CC[\SN]$ (\ref{eq:groupalg_decomp}). By \cite[Thm. 3.5.8]{etingof_book}, if this were possible with the irreducible submodules $W_p$ from above, then the dimension of $\cA$ would be $\sum_{p\dashv N} k_p^2$. We have not yet verified here that these $W_p$ comprise {\em all} irreducible submodules of $\cA$, nor that they are all distinct (not isomorphic to one another), but this is indeed the case \cite[Thm. 4.23]{steinberg_book}. The difference between the dimension of $\cA$ and that gained from the irreducible modules here is signalling that not all of the information about $\cA$ can be represented by the action of $\cA$ --- in this case, the left action of $\cA$ on its irreducible modules, and on itself, is not injective. 

To see this, let $W$ be an irreducible module of $\cA$. Then, since $\zee$ is a left identity in $\cA$, $\zee$ must act as the identity on $W$. But then, for any $\zee\sigma, \zee\sigma^{\prime}\in\cA$ such that $\zee\sigma\zee = \zee\sigma^{\prime}\zee$, 
\begin{equation}\label{eq:notinjective} 
(\zee\sigma)\cdot w =  (\zee\sigma)\cdot(\zee\cdot w) = (\zee\sigma\zee)\cdot w = (\zee\sigma^{\prime}\zee)\cdot w = (\zee\sigma^{\prime})\cdot w \,,
\end{equation}
for all $w\in W$. 

From Theorem \ref{thm:equivs} (ii), such $\zee\sigma\neq\zee\sigma^{\prime}\in\cA$ correspond to physically distinct genomes that share the same path probabilities and likelihood functions: we have $\zee\sigma\zee = \zee\sigma^{\prime}\zee$ if and only if $\sigma^{\prime} \in[\sigma]_D = \{d\sigma d^{\prime} : d,d^{\prime}\in\DN\}$.   

In the language of algebras, $\cA$ has a {\em non-trivial radical}  \cite[Def. 3.5.1]{etingof_book}, since (for $N>3$) there are non-zero elements $\zee\sigma - \zee\sigma^{\prime}\in\cA$ that annihilate all irreducible modules of $\cA$. As a concrete example, consider the following.

\bt{\begin{example}\label{eg:zsigz}
Let $\sigma = (1,2), \sigma^{\prime} = (2,3) \in\SN$. Setting $r=(1,2,\ldots,N)\in\DN$, we observe that $\sigma^{\prime} = r\sigma r^{-1}$, so that $\zee\sigma\zee = \zee\sigma^{\prime}\zee$. But there exists no $d\in\DN$ such that $\sigma = d\sigma^{\prime}$, and thus $\zee\sigma\neq\zee\sigma^{\prime}$. \exend
\end{example}}

For our practical purposes, this is perfect: the algebra sees genomes as distinct entities, but their representations do not distinguish between genomes corresponding to an equivalence class $[\sigma]_D$, whose likelihood functions are the same. \bt{Further, whilst $\zee\sigma$ and $\zee\sigma^{\prime}$ correspond to distinct genomes, if we consider them as rearrangements, they are not distinct, since they have the same action. We shall return to this presently, when we define models in the genome algebra.}

\begin{proposition}\label{prop:regAnot1-1}
Let $\zee\sigma,\zee\sigma^{\prime}\in\cA$. Then $\rho_{\reg}^{\cA}(\zee\sigma)=\rho_{\reg}^{\cA}(\zee\sigma^{\prime})$ if and only if $\sigma^{\prime}\in[\sigma]_D$.
\end{proposition}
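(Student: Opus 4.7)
The plan is to prove both implications by probing the regular representation matrices on the single distinguished basis element $\zee = \zee\sigma_1 \in B$. Since $\zee$ is the first basis element, comparing the first columns of $\rho^{\cA}_{\reg}(\zee\sigma)$ and $\rho^{\cA}_{\reg}(\zee\sigma')$ reduces the question to whether the products $\zee\sigma\zee$ and $\zee\sigma'\zee$ coincide as elements of $\cA \subseteq \CC[\SN]$, which can then be analysed directly in the group algebra. The basic tool throughout is the identity $\zee d = d\zee = \zee$ for every $d \in \DN$, which follows from $\DN$ being a group (essentially the argument already used in Proposition~\ref{prop:comm.idem}(i)).

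For the ``if'' direction, I would suppose $\sigma' = d_1 \sigma d_2$ with $d_1, d_2 \in \DN$ and simply compute, for every basis element $\zee\sigma_j \in B$,
\[ (\zee\sigma')(\zee\sigma_j) = \zee d_1 \sigma d_2 \zee \sigma_j = (\zee d_1)\sigma(d_2\zee)\sigma_j = \zee\sigma\zee\sigma_j = (\zee\sigma)(\zee\sigma_j)\,, \]
which shows that the two regular representation matrices agree column by column.

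For the ``only if'' direction, I would apply the hypothesis $\rho^{\cA}_{\reg}(\zee\sigma) = \rho^{\cA}_{\reg}(\zee\sigma')$ to $\zee = \zee\sigma_1$, obtaining $\zee\sigma\zee = \zee\sigma'\zee$ as elements of $\CC[\SN]$. Expanding the products,
\[ \zee\sigma\zee = \tfrac{1}{(2N)^2} \sum_{d_1, d_2 \in \DN} d_1\sigma d_2\,, \]
the right-hand side is a non-negative linear combination of the permutations in the double coset $\DN\sigma\DN = [\sigma]_D$, and a short orbit-counting argument shows that every element of $[\sigma]_D$ appears with the same strictly positive coefficient (the fibres of $(d_1, d_2) \mapsto d_1\sigma d_2$ all have size $|\DN \cap \sigma\DN\sigma^{-1}|$). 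Thus the support of $\zee\sigma\zee$ in the $\SN$-basis of $\CC[\SN]$ is exactly $[\sigma]_D$, and similarly the support of $\zee\sigma'\zee$ is exactly $[\sigma']_D$. Since distinct double cosets of $\DN$ in $\SN$ are disjoint, equality of the two products forces $[\sigma]_D = [\sigma']_D$, and hence $\sigma' \in [\sigma]_D$.

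The main obstacle lies in this second direction, and specifically in verifying that the support of $\zee\sigma\zee$ really is all of $[\sigma]_D$ rather than a proper subset --- that is, that no cancellations occur in the sum. Once uniform positivity of the coefficients on $[\sigma]_D$ is confirmed, the conclusion follows at once from the fact that double cosets partition $\SN$. The ``if'' direction is essentially a one-line manipulation using $\zee d = d\zee = \zee$.
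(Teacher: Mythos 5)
Your proof is correct and follows essentially the same route as the paper: the ``if'' direction is the one-line manipulation using $\zee d = d\zee = \zee$ applied to each basis element, and the ``only if'' direction evaluates the representation on the basis element $\zee$ to reduce to $\zee\sigma\zee = \zee\sigma'\zee$ and then invokes the equivalence of this with $\sigma'\in[\sigma]_D$. The only difference is that you explicitly justify that equivalence via the double-coset support argument (uniform positivity of coefficients on $\DN\sigma\DN$ and disjointness of double cosets), a detail the paper asserts in the discussion preceding the proposition rather than inside the proof.
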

\begin{proof}
For the reverse implication, we argue as above, replacing $w$ in (\ref{eq:notinjective}) by each basis element $\zee\sigma_i$ to verify that the matrices are the same. Conversely, if the regular representations coincide, then we immediately have that $(\zee\sigma)\zee = (\zee\sigma^{\prime})\zee$.
\end{proof}

We now explicitly consider the irreducible representations of $\cA$ on the irreducible submodules and use these to rewrite the regular character of $\cA$ in terms of irreducible characters of $\cA$. Let $p\dashv N$ such that $k_p>0$ and consider the module $W_p$ of $\cA$ which, as in the proof of Theorem \ref{thm:Adecomp}, has a basis $\{v_1,\ldots , v_{k_p}\}\subseteq\RR^d$ of orthonormal eigenvectors. Now, the action of $\cA$ on $W_p = \zee\cdot V_p$ is inherited from the action of $\CC[\SN]$ on $V_p$, so for arbitrary $\bm{\tau}\in\cA$, we define the $(k_p\times k_p)$ representation matrix $\rho^{\cA}_p(\bm{\tau})$ on $W_p$ via the action of $\rho_p(\bm{\tau})$ on the basis vectors $v_j$:
\[ \left(\rho^{\cA}_p(\bm{\tau})\right)_{ij} := v_i^T \rho_p(\bm{\tau})v_j \,.\] 
More concisely, setting $Q_p$ to be the $(D_p \times k_p)$ matrix with $\{v_1,\ldots , v_{k_p}\}$ as columns, we have
\begin{equation}\label{eq:QrepA} \rho^{\cA}_p(\bm{\tau}) = Q_p^T \rho_p(\bm{\tau}) Q_p \,.
\end{equation}

Clearly (also c.f. (\ref{eq:notinjective})), $\rho_p^{\cA}(\zee)$ is the $(k_p\times k_p)$ identity matrix for each such $p\dashv N$. For each $p\dashv N$ such that $k_p=0$, we formally define $\rho_p^{\cA}$ to be the zero representation. 
Now we may calculate the irreducible characters $\chi^{\cA}_p$ of $\cA$ and see that they coincide with the irreducible characters $\chi_p$ of $\CC[\SN]$ restricted to $\cA$. 

\begin{proposition}\label{prop:chars_equal}
For each $p \vdash N$ and $\bm{\tau}\in\cA$, $\chi^{\cA}_p(\bm{\tau}) = \chi_p(\bm{\tau})$.
\end{proposition}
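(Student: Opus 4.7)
The plan is to exploit the fact that $\zee$ acts as a left identity on $\cA$, together with the orthogonal-eigenbasis decomposition of $\rho_p(\zee)$ that was set up before the statement. Since $\cA = \zee\CC[\SN]$, any $\bm{\tau}\in\cA$ satisfies $\bm{\tau} = \zee\bm{\tau}$ (because $\zee^2=\zee$), so that $\rho_p(\bm{\tau}) = \rho_p(\zee)\rho_p(\bm{\tau})$. This is the key identity that lets us ``discard'' the part of $V_p$ on which $\rho_p(\zee)$ vanishes.

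Next, I would compute the character $\chi_p(\bm{\tau}) = \tr(\rho_p(\bm{\tau}))$ in the orthonormal basis $\{v_1,\ldots,v_{D_p}\}$ of simultaneous eigenvectors of $\rho_p(\zee)$ and $\rho_p(\ess)$ guaranteed by Corollary~\ref{cor:zee-ess}, with $v_1,\ldots,v_{k_p}$ corresponding to eigenvalue $1$ and the remaining eigenvectors to eigenvalue $0$. Writing
\[
\chi_p(\bm{\tau}) = \sum_{i=1}^{D_p} v_i^T \rho_p(\zee)\rho_p(\bm{\tau}) v_i ,
\]
I would use the symmetry of $\rho_p(\zee)$ (equation (\ref{eq:z_symm})) to move it onto the left vector, giving $v_i^T\rho_p(\zee) = (\rho_p(\zee)v_i)^T$. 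For $i \leq k_p$ this equals $v_i^T$, and for $i > k_p$ it equals $0$. Hence only the first $k_p$ summands survive, and
\[
\chi_p(\bm{\tau}) = \sum_{i=1}^{k_p} v_i^T \rho_p(\bm{\tau}) v_i = \tr(Q_p^T \rho_p(\bm{\tau}) Q_p) = \tr(\rho^{\cA}_p(\bm{\tau})) = \chi^{\cA}_p(\bm{\tau}),
\]
using the definition (\ref{eq:QrepA}) of $\rho^{\cA}_p$ in the final step.

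Finally, I would handle separately the degenerate case $k_p = 0$: here $\rho_p(\zee)$ has only the eigenvalue $0$ and, being idempotent (hence diagonalisable), must be the zero matrix, so $\rho_p(\bm{\tau}) = \rho_p(\zee)\rho_p(\bm{\tau}) = 0$, giving $\chi_p(\bm{\tau}) = 0$; by convention $\rho^{\cA}_p$ is the zero representation, so $\chi^{\cA}_p(\bm{\tau}) = 0$ as well.

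I do not expect any real obstacle here; the only subtlety is that $\zee$ is only a \emph{left} identity on $\cA$ (as noted in the discussion surrounding (\ref{eq:notinjective})), so one must apply $\bm{\tau} = \zee\bm{\tau}$ on the left and then invoke the symmetry of $\rho_p(\zee)$ rather than assume $\bm{\tau}\zee=\bm{\tau}$. Once this is set up correctly, the proof is essentially the observation that $\rho^{\cA}_p$ is the compression of $\rho_p$ to the range of the orthogonal projector $\rho_p(\zee)$, and the character of $\rho_p$ on elements of $\cA$ is automatically supported on that range.
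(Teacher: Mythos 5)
Your proof is correct and follows essentially the same route as the paper's: both compute the trace in the orthonormal simultaneous eigenbasis of $\rho_p(\zee)$ and use the fact that $\rho_p(\zee)$ annihilates the eigenvectors for eigenvalue $0$ to identify $\chi_p$ with the trace of the compression $Q_p^T\rho_p(\cdot)Q_p=\rho_p^{\cA}(\cdot)$. The only cosmetic difference is that the paper inserts $\zee$ on the right and invokes cyclicity of the trace with idempotency of $\zee$, whereas you use $\bm{\tau}=\zee\bm{\tau}$ together with the symmetry of $\rho_p(\zee)$ — the same manipulation in the opposite direction.
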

\begin{proof}
Let $p\dashv N$. By linearity, we need only verify the claim on a generic basis element, $\zee\sigma\in\cA$. Again utilising the orthonormal eigenvectors $\{v_1,\ldots,v_{D_p}\}$ of $\rho_p(\zee)$, where those for $i\leq k_p$ correspond to the eigenvalue $1$ and the remainder to the eigenvalue $0$, we have  
\begin{align*}
 \chi_p^{\cA}(\zee\sigma) =  \sum_{i = 1}^{k_p} \left(\rho^{\cA}_p(\zee\sigma)\right)_{ii} 
	= \sum_{i=1}^{k_p} v_i^T \rho_p(\zee\sigma)v_i 
	&= \sum_{i=1}^{D_p} v_i^T \rho_p(\zee\sigma\zee)v_i 
	= \chi_p(\zee\sigma\zee) 
	= \chi_p(\zee\sigma)  \,,
\end{align*}
where in the final step we have used the cyclicity of the trace and the idempotency of $\zee$ (Proposition~\ref{prop:comm.idem}).
\end{proof}

Combining Propositions \ref{prop:regrepA} and \ref{prop:chars_equal} gives the desired character decomposition.

\begin{corollary}\label{coro:Acharsdecomp}
For arbitrary $\bm{\tau}\in\cA$,
\[ \chi^{\cA}_{\reg}(\bm{\tau}) =\sum_{p\vdash N} D_p \chi_p^{\cA}(\bm{\tau})\,.   \] 
\end{corollary}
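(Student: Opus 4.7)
The plan is to recognise this as a direct bookkeeping combination of the two preceding propositions, with no further machinery required. First, I would invoke Proposition~\ref{prop:regrepA}(ii), which already gives the decomposition
\[
\chi^{\cA}_{\reg}(\bm{\tau}) = \sum_{p\vdash N} D_p\, \chi_p(\bm{\tau})
\]
for every $\bm{\tau}\in\cA$, but phrased using the irreducible characters $\chi_p$ of the ambient group algebra $\CC[\SN]$ rather than the irreducible characters $\chi_p^{\cA}$ of $\cA$ itself.

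Next, I would use Proposition~\ref{prop:chars_equal}, which identifies these two families on the subalgebra: $\chi_p^{\cA}(\bm{\tau}) = \chi_p(\bm{\tau})$ for each $p\vdash N$ and each $\bm{\tau}\in\cA$. Substituting this identity termwise into the sum above immediately yields
\[
\chi^{\cA}_{\reg}(\bm{\tau}) = \sum_{p\vdash N} D_p\, \chi^{\cA}_p(\bm{\tau}),
\]
which is the desired decomposition. No additional argument is needed because both propositions have already been established for arbitrary elements of $\cA$, so the identification holds pointwise on the whole algebra.

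There is no genuine obstacle here; the only thing worth flagging is that the formally defined zero representation for those $p\vdash N$ with $k_p=0$ contributes trivially on both sides (since then $\chi_p^{\cA}\equiv 0$ and, by Proposition~\ref{prop:chars_equal}, $\chi_p$ vanishes on $\cA$ as well), so the sum may equally well be restricted to those $p$ with $k_p>0$ without affecting the identity. Hence the proof is essentially a one-line substitution.
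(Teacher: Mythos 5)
Your proposal is correct and matches the paper's proof exactly: the paper likewise obtains the corollary by combining Proposition~\ref{prop:regrepA}(ii) with Proposition~\ref{prop:chars_equal}. The extra remark about the $k_p=0$ terms is harmless and consistent with the paper's conventions.
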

\vspace{-1cm} \qed

Having defined and decomposed the regular character of $\cA$, we are ready to return to the likelihood calculations. Using the equivalence of the characters of $\cA$ and $\CC[\SN]$ on the algebra $\cA$, along with the the interplay between the genome and model symmetry, we now verify that we may work entirely in $\cA$ to calculate the genome path probabilities and thus the likelihood functions, as defined in the previous section (\ref{eq:simplelike}). 

\begin{theorem}\label{thm:alphasA}
Let  $\sigma\in\SN$ and $k\in \NN_0$. Then
\begin{equation}\label{eq:alpha-char-A}  \alpha_k(\sigma) = \tfrac{2N}{N!}\chi^{\cA}_{\reg}(\zee\sigma^{-1}\zee\ess^k) 
			= \tfrac{2N}{N!} \sum_{p\vdash N} D_p\, \chi^{\cA}_{p}(\zee\sigma^{-1}\zee\ess^k)\,.
\end{equation}
\end{theorem}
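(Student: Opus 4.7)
The plan is to derive the claim from the symmetric group algebra expression for $\alpha_k(\sigma)$ already established in (\ref{eq:alpha_k-SN}), namely
\[ \alpha_k(\sigma) = \tfrac{2N}{N!}\chi_{\reg}(\sigma^{-1}\zee\ess^k), \]
and then transport this expression from $\CC[\SN]$ to $\cA$. The key observation is that the right-hand side of the target equation involves the element $\zee\sigma^{-1}\zee\ess^k$, which visibly lies in $\cA = \zee\CC[\SN]$, so it makes sense to evaluate the regular character of $\cA$ on it.

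First I would show the purely algebraic identity
\[ \chi_{\reg}(\sigma^{-1}\zee\ess^k) = \chi_{\reg}(\zee\sigma^{-1}\zee\ess^k). \]
Applying cyclicity of the regular character (trace), the left-hand side equals $\chi_{\reg}(\zee\ess^k\sigma^{-1})$, which, using that $\zee$ and $\ess$ commute (Proposition~\ref{prop:comm.idem}(ii)), becomes $\chi_{\reg}(\ess^k\zee\sigma^{-1})$. Similarly, the right-hand side equals $\chi_{\reg}(\zee\ess^k\zee\sigma^{-1})$ by cyclicity, which by commutativity becomes $\chi_{\reg}(\ess^k\zee^2\sigma^{-1})$, and the idempotency of $\zee$ (Proposition~\ref{prop:comm.idem}(i)) reduces this to $\chi_{\reg}(\ess^k\zee\sigma^{-1})$. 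Hence both sides agree.

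Next, since $\zee\sigma^{-1}\zee\ess^k \in \cA$, Proposition~\ref{prop:regrepA}(ii) allows me to identify $\chi_{\reg}$ with $\chi^{\cA}_{\reg}$ on this element, yielding
\[ \alpha_k(\sigma) = \tfrac{2N}{N!}\chi^{\cA}_{\reg}(\zee\sigma^{-1}\zee\ess^k). \]
Finally, I apply Corollary~\ref{coro:Acharsdecomp} to decompose the regular character of $\cA$ as a sum of irreducible characters of $\cA$ weighted by the multiplicities $D_p$, giving the second equality
\[ \tfrac{2N}{N!}\chi^{\cA}_{\reg}(\zee\sigma^{-1}\zee\ess^k) = \tfrac{2N}{N!} \sum_{p\vdash N} D_p\, \chi^{\cA}_{p}(\zee\sigma^{-1}\zee\ess^k). \]

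No real obstacle is anticipated, because all of the heavy lifting has been done in the preceding results: the idempotency and commutativity of Proposition~\ref{prop:comm.idem} make the insertion of the extra $\zee$ free, the character identification in Proposition~\ref{prop:regrepA}(ii) lets us pass into $\cA$ without changing numerical values, and Corollary~\ref{coro:Acharsdecomp} immediately gives the irreducible decomposition. The only subtle point to flag is that inserting $\zee$ on the left of $\sigma^{-1}$ is exactly what places the argument inside $\cA$, so that the $\cA$-character theory becomes applicable — this is the bridge between the two algebras and essentially the whole content of the theorem.
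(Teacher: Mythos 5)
Your proposal is correct and follows essentially the same route as the paper's proof: start from (\ref{eq:alpha_k-SN}), use cyclicity of the trace together with the commutativity and idempotency of Proposition~\ref{prop:comm.idem} to insert the extra $\zee$, then invoke Proposition~\ref{prop:regrepA} and Corollary~\ref{coro:Acharsdecomp}. You have merely written out in full the one-line character manipulation that the paper compresses into ``since $\zee$ and $\ess$ commute, $\zee$ is idempotent and the trace is cyclic.''
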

\begin{proof}
From (\ref{eq:alpha_k-SN}), $\alpha_k(\sigma)  =  \tfrac{2N}{N!} \chi_{\reg}(\sigma^{-1}\zee\ess^k)=  \tfrac{2N}{N!} \chi_{\reg}(\zee\sigma^{-1}\zee\ess^k)$, since $\zee$ and $\ess$ commute, $\zee$ is idempotent and the trace is cyclic. The first equality is then clear from Proposition~\ref{prop:regrepA} and the second from Corollary~\ref{coro:Acharsdecomp}.
\end{proof}

We have mentioned the importance of the `identity counting' property of the regular character, that is, Proposition~\ref{prop:regrepA} (i), but this combinatorial component is somewhat hidden in the proof of Theorem~\ref{thm:alphasA}. To highlight it, one may begin with the identity (\ref{eq:s^k}) stated in the previous section and, for any given genome $\zee\sigma$ ($\sigma\in\SN$), multiply by $\zee\sigma^{-1}\zee$ to obtain
\[ \zee\sigma^{-1}\zee\ess^k=\tfrac{1}{2N}\sum_{\tau\in\SN} \alpha_k(\tau)\,\zee\sigma^{-1}\tau\,.\]
By observing that there are exactly $2N$ values of $\tau\in\SN$ for which $\zee\sigma^{-1}\tau=\zee$, one thus sees directly that the coefficient of $\zee$ in the expansion of $\zee\sigma^{-1}\zee\ess^k$ is $\alpha_k(\sigma)$.

Note that we could have simplified the above character expression (\ref{eq:alpha-char-A}) a little, that is,  
\[ \chi_p^{\cA}(\zee\sigma^{-1}\zee \ess^k) =  \chi_p^{\cA}(\zee\sigma^{-1}\ess^k \zee) =  \chi_p^{\cA}(\zee\sigma^{-1}\ess^k)\,.\]
However, as we did in the algebra $\CC[\SN]$, we want to diagonalise the matrices representing the model element, namely the matrices $\rho_p^{\cA}(\zee\ess)$. So we keep the middle $\zee$ and write
\[ \chi_p^{\cA}(\zee\sigma^{-1}\zee \ess^k)  = \tr\left(\rho_p^{\cA}\big(\zee\sigma^{-1}(\zee \ess)^k\big)\right)
											= \tr\left(\rho_p^{\cA}(\zee\sigma^{-1})\rho_p^{\cA}(\zee \ess)^k\right)\,.\]
For each $p\dashv N$, as in the proof of Corollary \ref{cor:zee-ess}, we can choose $\rho_p(\ess)$ to be symmetric; thus by the definition (\ref{eq:QrepA}) each matrix $\rho^{\cA}_p(\zee\ess)$ is symmetric and thus diagonalisable. Then we obtain
\begin{equation}\label{eq:ppsA}	
\alpha_k(\sigma) = \tfrac{2N}{N!} \sum_{p\vdash N} D_p \sum_{i=1}^{R_p} 
			\lambda_{p,i}^k \mathrm{tr}(\rho_p^{\cA}(\zee\sigma^{-1}) E^{\cA}_{p,i})\,,
\end{equation}
where $E^{\cA}_{p,i}$ is the projection onto the eigenspace of the $i$th eigenvalue, $\lambda_{p,i}$, of $\rho_p^{\cA}(\zee\ess)$. 

Now, finally substituting the path probabilities (\ref{eq:ppsA}) into the theoretical likelihood expression (\ref{eq:simplelike}), we obtain
\begin{equation}\label{eq:Alike}
L(T|\sigma) = 
e^{-T} \tfrac{2N}{N!}\sum_{p\vdash N} D_p\sum_{i=1}^{R_p}\tr(\rho^{\cA}_p(\zee\sigma^{-1}) E^{\cA}_{p,i}) e^{\lambda_{p,i}T}\,.
\end{equation}

It is clear from Theorem~\ref{thm:alphasA} that the likelihood expression (\ref{eq:Alike}), involving only elements of the genome algebra $\cA$, is equal to that (\ref{eq:perms-like}) gained via the group algebra $\CC[\SN]$. We now show that the above is, really, a simplified version of (\ref{eq:perms-like}): that is, by working in the smaller algebra we have eliminated the many eigenvalue terms that occur with zero coefficients. 

\begin{proposition}\label{prop:eqevals}
For each $p\vdash N$ such that $W_p\neq \{0\}$, the eigenvalues of the matrix $\rho^{\cA}_p(\zee\ess)$ are exactly the eigenvalues of $\rho_p(\ess)$ that occur with non-zero coefficient in the likelihood expression (\ref{eq:perms-like}).
\end{proposition}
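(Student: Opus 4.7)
The plan is to leverage the simultaneous eigenvector machinery established in Corollary~\ref{cor:zee-ess} together with the ``knockout'' identity (\ref{eq:knockout}). Fix $p\vdash N$ with $k_p > 0$, and take the orthonormal real basis $\{v_1, \ldots, v_{D_p}\}$ of joint eigenvectors for $\rho_p(\zee)$ and $\rho_p(\ess)$, ordered so that $\rho_p(\zee) v_j = v_j$ for $j \leq k_p$ and $\rho_p(\zee) v_j = 0$ otherwise. Let $\mu_j$ denote the $\rho_p(\ess)$-eigenvalue of $v_j$.

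First I would compute $\rho_p^{\cA}(\zee\ess) = Q_p^T \rho_p(\zee\ess) Q_p$ directly in this basis. Using $\rho_p(\ess) v_j = \mu_j v_j$ together with $\rho_p(\zee) v_j = v_j$ for $j \leq k_p$, the entries collapse to
\[ \left(\rho_p^{\cA}(\zee\ess)\right)_{ij} = v_i^T \rho_p(\zee)\rho_p(\ess) v_j = \mu_j\, v_i^T \rho_p(\zee) v_j = \mu_j\, v_i^T v_j = \mu_j \delta_{ij}, \]
so that $\rho_p^{\cA}(\zee\ess)$ is the diagonal matrix $\mathrm{diag}(\mu_1, \ldots, \mu_{k_p})$. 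Consequently, its set of eigenvalues is exactly
\[ \{ \mu_j : 1 \leq j \leq k_p\} = \{\lambda_{p,i} : J_i \cap \{1,\ldots,k_p\} \neq \emptyset\}, \]
where $J_i$ is the index set of eigenvectors corresponding to $\lambda_{p,i}$ as introduced at (\ref{eq:partials}).

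Next I would verify that this set coincides precisely with the eigenvalues that appear with non-zero coefficient in (\ref{eq:perms-like}). Combining (\ref{eq:partials}) with (\ref{eq:knockout}),
\[ \tr(\rho_p(\sigma^{-1}\zee) E_{p,i}) = \sum_{j \in J_i,\; j \leq k_p} v_j^T \rho_p(\sigma^{-1}) v_j . \]
If $J_i \cap \{1,\ldots,k_p\} = \emptyset$, this partial trace vanishes identically in $\sigma$, so $\lambda_{p,i}$ never contributes to the likelihood. Conversely, if the intersection is non-empty, specialising to $\sigma = e$ makes $\rho_p(\sigma^{-1})$ the identity, and the sum equals $|J_i \cap \{1,\ldots,k_p\}| > 0$; hence $\lambda_{p,i}$ occurs with non-zero coefficient for at least one genome. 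The two sets of eigenvalues therefore match.

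The main obstacle here is not technical depth but careful bookkeeping. One needs to be precise about what ``occurring with non-zero coefficient'' means (existence of \emph{some} $\sigma$ giving a non-zero partial trace, rather than a condition holding uniformly in $\sigma$), and careful about counting eigenvalues as a set when $\rho_p(\ess)$ has repeated eigenvalues inside the $1$-eigenspace of $\rho_p(\zee)$. The simultaneous eigenbasis handles this automatically: all $v_j$ sharing a given $\mu_j$ lie within a single $\rho_p(\ess)$-eigenspace, so the diagonal entries of $\rho_p^{\cA}(\zee\ess)$ list exactly the relevant eigenvalues with the correct multiplicities and no spurious repetitions.
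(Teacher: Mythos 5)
Your proof is correct and follows essentially the same route as the paper's: diagonalise $\rho_p^{\cA}(\zee\ess)$ in the simultaneous eigenbasis via $Q_p$, then use (\ref{eq:partials}) and the knockout identity (\ref{eq:knockout}) to show that any eigenvalue of $\rho_p(\ess)$ not appearing on that diagonal has identically vanishing partial trace. Your extra check at $\sigma=e$, confirming that the surviving eigenvalues genuinely occur with non-zero coefficient for at least one genome, is a small completeness improvement over the paper's argument, which only establishes the other containment explicitly.
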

\begin{proof}
Let $p\dashv N$ such that $W_p\neq\{0\}$. As above, take the set $\{v_1,v_2,\ldots,v_{D_p}\}\subseteq\RR^{D_p}$ of orthonormal eigenvectors for both $\rho_p(\zee)$ and $\rho_p(\ess)$, with the first $k_p$ corresponding to the eigenvalue $1$ of $\rho_p(\zee)$, and form the matrix $Q_p$ with the first $k_p$ vectors as columns. Then, as in (\ref{eq:QrepA}),
\begin{equation}\label{eq:evalue_matrix}
\rho_p^{\mathcal{A}}(\zee\ess)	
	= Q_p^T \rho_p(\zee\ess) Q_p 
	= Q_p^T \rho_p(\ess\zee) Q_p 
	= Q_p^T \rho_p(\ess) Q_p 
	= \left( \begin{array}{cccc} 
				\lambda_1 & 0 & \ldots & 0 \\
				0 & \lambda_2 & \ldots & 0 \\
				& & \ddots &  \\
				0 & 0 & \ldots & \lambda_{k_p} \end{array}\right) \,, 
\end{equation}
where each $\lambda_i$ is clearly an eigenvalue of both $\rho^{\cA}_p(\zee\ess)$ and $\rho_p(\ess)$ (and the $\lambda_i$ are not necessarily distinct). Suppose $\lambda^{\prime}$ is an eigenvalue of $\rho_p(\ess)$ that does not appear in the matrix of (\ref{eq:evalue_matrix}). Then $\lambda^\prime$ has corresponding eigenvector(s) $\{v_j: j\in J^{\prime}\}$, where $J^{\prime}\subseteq \{k_{p+1}, k_{p+2}, \ldots, D_p\}$.
But then, for any $\sigma\in\SN$, the coefficient of the $\lambda^{\prime}$ term in the likelihood expression is the partial trace
\[ \sum_{j\in J^{\prime}} v_j^T \rho_p(\sigma^{-1}\zee) v_j = 0\,,\]
by (\ref{eq:partials}) and (\ref{eq:knockout}). 
\end{proof}

Note that, although in the proof of Proposition \ref{prop:eqevals} we construct each $\rho_p^{\cA}(\zee\ess)$ as a diagonal matrix (in which case the projections onto the eigenspaces would be diagonal matrices of $1$s and $0$s), we do this only to verify that the representation has the required properties, and we utilise the eigenvectors of the representation $\rho_p(\ess)$. In practice, the whole point is to {\em not} calculate the much bigger representations $\rho_p(\ess)$. That is, when implementing calculations, we would expect to construct a basis for each irreducible module $W_p$ directly, hence the general form of the projections in (\ref{eq:Alike}). 

We note that the equivalence of the path probabilities and thus likelihoods on the classes $[\sigma]_D$ and $[\sigma]_{DR}$ stated in Theorem \ref{thm:equivs} can alternatively be obtained by working directly in the genome algebra $\cA$. We omit the proof here since we shall prove a more general version of the result in Section~\ref{sec:gen}.

Since $\dim(\CC[\SN]) = N! = \ds\sum_{p\vdash N} D_p^2$ and $\dim(\cA) = \tfrac{N!}{2N} = \ds\sum_{p\vdash N} D_p k_p$, we have
\begin{equation}\label{eq:dimred}
 \sum_{p \vdash N} D_p k_p = \sum_{p\vdash N} D_p \tfrac{D_p}{2N} \,.
\end{equation}
\bt{Note that this does not imply that $k_p = \tfrac{1}{2N} D_p$ for each (or any) $p\vdash N$, rather that} {\em on average}, and asymptotically, the dimension of each irreducible submodule $W_p$ of $\cA$ is $\tfrac{1}{2N}$th of the dimension of the irreducible submodule $V_p$ of $\CC[\SN]$. To put this another way, on average, $\tfrac{2N-1}{2N}$ths of the computations in the group algebra, as documented in \cite{vjez_circ1}, resulted in zeroes. 

Given that the dimension of the algebra $\cA$ is still $\tfrac{N!}{2N}$, this does not significantly reduce the computational complexity. \bt{However, since the multiplicity of the irreducible submodules in $\cA$ is the same as in the group algebra (\ref{eq:dimred}), the reduction in the dimension of the irreducible submodules is (relatively) much larger than the reduction in total dimension.}

\bt{\begin{example}
Consider $N = 6$. There are $N! = 720$ permutations in $\mathcal{S}_6$, so the dimension of the regular representation of $\CC[\mathcal{S}_6]$ is 720. The dimensions of the irreducible modules $V_p$ of $\CC[\mathcal{S}_6]$ (given as a list rather than a set as they are not all distinct) are
\[ [D_p : p\vdash 6] = [1,5,9,10,5,16,10,5,9,5,1]\,.\]
Moving to the genome algebra, there are $\tfrac{N!}{2N} = 60$ distinct genomes, so the dimension of the regular representation of $\cA$ is $60$. The dimensions of the corresponding irreducible modules $W_p = \zee\cdot V_p$ of $\cA$ are 
\[ [k_p: p\vdash 6] = [1,0,2,0,0,1,1,2,0,1,0]\,.\]
Thus, for any rearrangement model, each likelihood expression will be a sum of at most eight terms, corresponding to at most eight distinct eigenvalues. \exend
\end{example}}

\bt{We note that such dimension reductions are less striking for larger $N$. In any case,} we see a significant theoretical gain here: the genome algebra $\cA$ incorporates the symmetry of the genomes and models into a unified framework, within which the problem can be formulated and the computations performed. To highlight this, we next consider the regular representations of $\ess$ in the algebra $\CC[\SN]$ and of $\zee\ess$ in the algebra $\cA$ as Markov matrices; then we conclude this section by re-formulating the model in the genome algebra framework. 

\subsection{The Markov interpretation}\label{subsec:markov}

In the group algebra, $\CC[\SN]$, the rows and columns of the regular representation are determined by the $N!$ permutations $\sigma_i\in\SN$. In particular, 
\[ \rho_{\reg}(\ess) = \sum_{a\in\cM} w(a) \rho_{\reg}(a)\,,\]
where for each rearrangement permutation $a\in\cM$, the $ij$th entry of $\rho_{\reg}(a)$ is $1$ if $a\sigma_j = \sigma_i$ and $0$ otherwise, so that the $\rho_{\reg}(a)$ matrices have exactly one `$1$' in each row and column. Then $\rho_{\reg}(\ess)$, as a convex sum of Markov matrices, is itself a Markov matrix. 
The $j$th column of $\rho_{\reg}(\ess)$ contains $|\cM|$ non-zero entries, each equal to a unique $w(a)$, since for the distinct permutations $a\in\M$, the permutations $a\sigma_j$ are all distinct.

Thus $\rho_{\reg}(\ess)$ is the transition matrix of a discrete Markov chain where the states are the $N!$  permutations in $\SN$ and the $ij$th entry is the probability of permutation $\sigma_j$ transitioning into permutation $\sigma_i$ via one rearrangement chosen from the model $\M$. That is, 
\[ \rho_{\reg}(\ess)_{ij} = \left\{\begin{array}{ll} w(a) & \textrm{ if } a= \sigma_i\sigma_j^{-1}\in\cM \,,\\ 0 & \textrm{ otherwise}\,. \end{array} \right. \]
It is clear from this formulation that the matrix $\rho_{\reg}(\ess)$ is symmetric if and only if the model has the rearrangement reversibility property (\ref{cond:m2}). Thus, since the stationary distribution on the Markov chain is the uniform distribution on the states, the reversibility property (\ref{cond:m2}) of the model is equivalent to reversibility of the Markov model. 

Now, in the algebra $\cA$, the corresponding matrix representing the model element is
\begin{equation}\label{eq:reg-zs} 
\rho^{\cA}_{\reg}(\zee\ess) = \sum_{a\in\cM} w(a) \rho^{\cA}_{\reg}(\zee a)\,.
\end{equation}
As above, the matrices on the right hand side represent basis elements of the algebra, here $\zee a$ for $a\in\M$. 
Although the basis elements here do not form a group (so their regular representations are not, in general, zero-one matrices), each of the $\rho^{\cA}_{\reg}(\zee a)$ is again a Markov matrix: for a given $a\in\cM$, the $ij$th entry of $\rho^{\cA}_{\reg}(\zee a)$ is the coefficient of $\zee\sigma_i$ in the expansion of $(\zee a) (\zee\sigma_j)$ and, since $\zee = \tfrac{1}{2N}\sum_{d\in\DN}d$, the expansion is a convex sum. Thus the entries in each column of each $\rho^{\cA}_{\reg}(\zee a)$ sum to one and $\rho^{\cA}_{\reg}(\zee\ess)$, as a convex sum of Markov matrices, is indeed a Markov matrix.

Each basis element $\zee\sigma_i$ corresponds to a genome, so $\rho^{\cA}_{\reg}(\zee\ess)$ is the transition matrix of a Markov chain where the states are genomes. The $ij$th entry, which we calculated as the proportion of the expansion of $(\zee\ess) (\zee\sigma_j)$ that is equal to $\zee\sigma_i$, is of course the probability of the genome $ (\zee\sigma_j)$ transitioning into the genome $\zee\sigma_i$ in one step, via the model.

\subsection{Permutation clouds: a unifying concept}\label{subsec:clouds}

In the permutation approach detailed in Section~\ref{sec:perms}, we considered rearrangement events to be individual permutations acting on individual permutations. In the setting of the genome algebra, we represent both genomes and rearrangements by permutation clouds, each of which is a sum of permutations weighted by their probabilities ($\zee\sigma = \tfrac{1}{2N}\sum_{d\in\DN} d\sigma$). A single rearrangement event is here modelled by a permutation cloud $\zee a$ acting on a permutation cloud $\zee\sigma$. Mathematically, this event results in a convex combination of permutation clouds $\sum c_i\zee\sigma_i$; biologically, it results in one of the genomes $\zee\sigma_i$, according to the probability distribution given by the coefficients $c_i$.  

The permutation cloud view of circular genomes seems to us quite natural. To observe a genome, we fix an orientation and a reference
frame, and assign to it a single permutation (any one, from the appropriate equivalence class $[\sigma]$, with probability $\tfrac{1}{2N}$). We refer to this as an {\em instance} of the genome. Theoretically, however, the genome exists simultaneously as all of its possible physical orientations in space; it {\em is} the cloud, $\zee\sigma$. 

What about rearrangements? For a rearrangement permutation $a\in\SN$ and $d\in\DN$, the result of the action $da$ on $\sigma\in\SN$ is $d(a\sigma)\in[a\sigma]$, that is, it results in the same genome as $a$ acting on $\sigma$. So we can think of $\zee a$ acting on $\zee \sigma$ as encompassing (all orientations of ($a$ acting on (all orientations of $\sigma$))).

Of course, the action of $\zee a$ on $\zee \sigma$ also incorporates the dihedral symmetries of $a$ as an action, that is, $dad^{-1}$ for $d\in\DN$. 
For a biological model $(\M,w,dist)$ for evolution of genomes as permutations, under the assumption of dihedral symmetry, we wrote (\ref{eq:sym-model})
\begin{equation}\label{eq:model-set} \M = \{ da_1 d^{-1}, da_2 d^{-1}, \ldots , da_m d^{-1} : d\in\DN\}\subseteq\SN \,,\end{equation}
where for each $a_k$ and all $d\in\DN$, $w(da_k d^{-1}) = w(a_k)$. 
Since $da_kd^{-1}\in [a_k]_D$, the action of $\zee (da_kd^{-1})$ on $\zee\sigma$ is the same as the action of $\zee a_k$ on $\zee\sigma$ (see Proposition \ref{prop:regAnot1-1}) and thus each $\rho_{\reg}^{\cA}(\zee (da_kd^{-1}))=\rho_{\reg}^{\cA}(\zee a_k)$.

Having shown that the MLE computations can be performed in the genome algebra $\cA$, and discussed the representation of both genomes and rearrangements as permutation clouds in this algebra, it remains to reformulate the model within this framework.
Given the model (\ref{eq:model-set}) in the permutation framework, the equivalent model for evolution in the genome algebra setting is $(\M^{\cA}, w^{\cA},dist)$, where
\[ \M^{\cA} := \{ \zee a_1 , \ldots , \zee a_m \}\subseteq\cA\,,\]
and $w^{\cA}(\zee a_i) = 2N w(a_i)$ for each $i$. 

Since the dihedral symmetry of the genomes is built into the algebra $\cA$, specifying the model to consist of elements of $\cA$ in this way makes the dihedral symmetry requirement (\ref{cond:m1}) redundant. \bt{Model reversibility in this setting is formulated as}
\begin{equation}
\textrm{ for each } \zee a\!\in\!\M^{\cA} , \;\zee a^{-1} \!\in\!\M^{\cA} \textrm{ and } w^{\cA}(\zee a^{-1}) = w^{\cA}(\zee a)\,.\tag{M2$^{\cA}$}\label{cond:m2A}
\end{equation}
\bt{This condition is sufficient to ensure that the irreducible representations of $\zee\ess$ are diagonalisable, which is convenient for computations.} 
Although the algebra $\cA$ has a left identity, it does not contain inverses, so $\zee a^{-1}$ is not (in general) an inverse of $\zee a$. However, as we shall see in the next section, \bt{model reversibility is, further, equivalent to the reversibility of the Markov model. We conclude this section with an example to illustrate some of these key concepts.}

\bt{\begin{example}\label{eg:models}
Suppose we wish to consider a model consisting only of ``small inversions'', which we will take to be inversions of two or three regions. In the permutation framework, we would define this model to be
\begin{align*}
\cM:&= \{ d (1,2) d^{-1}, d(1,3)d^{-1} : d\in\DN\} \\
	&= \{ (1,2),(2,3),\ldots ,(N-1,N), (N,1), (1,3),(2,4),\ldots,(N-1,1),(N,2)\}\,.
\end{align*}
Here there are $N$, rather than $2N$, distinct instances of each rearrangement type, since for inversions\footnote{\bt{in this non-oriented region case}}, each flip coincides with a rotation. 

For the rearrangement probabilities, one could choose the uniform distribution, $w(a) = \tfrac{1}{2N}$ for all $a\in \cM$, or one may consider the larger inversions to be less likely and set, for all $a\in\cM$,
\[ w_{\prime}(a)= \left\{ \begin{array}{ll}  \tfrac{2}{3N}, & \textrm{if } a =  d(1,2)d^{-1}, \textrm{ some } d\in\DN;\\
						\tfrac{1}{3N}, & \textrm{if } a =  d(1,3)d^{-1}, \textrm{ some } d\in\DN\,. \end{array}\right.\]
In the genome algebra, the model is simpler to express; we take the rearrangement instances $(1,2)$ and $(1,3)$ and the model is
\[\cM^{\cA}:= \{\zee(1,2),\zee(1,3)\}\,.\]
The weight functions corresponding to the above would then be $w^{\cA}(\zee(1,2)) = w^{\cA}(\zee(1,3)) = \tfrac{1}{2}$ or $w_{\prime}^{\cA}(\zee(1,2)) = \tfrac{2}{3}, w_{\prime}^{\cA}(\zee(1,3)) = \tfrac{1}{3}$.

One may recall from Example~\ref{eg:zsigz} that $\zee(1,2)\neq\zee(2,3)$, however, one need not (and indeed should not) include both of these in the rearrangement model since they have the same action: $\zee(1,2)\cdot\zee\sigma = \zee(2,3)\cdot\zee\sigma$ for any genome $\zee\sigma\in\cA$, since $\zee(1,2)\zee =\zee(2,3)\zee$. Further, one must be aware of {\em complementary rearrangements}. 

For example, in the case $N=5$, the actions of $\zee(1,2)$ and $\zee(1,3)$ coincide: inverting a two-region segment is, under dihedral symmetry, the same rearrangement as inverting the complementary three region segment (correspondingly, when $N=5$, $\zee(1,2)\zee=\zee(1,3)\zee$).\exend
\end{example}

One can easily eliminate the possibility of such `rearrangement redundancies'  by reformulating the model in the genome algebra as a set of elements of the form  $\zee a \zee$; we do this in the next section (\ref{eq:model-A}).  More on such considerations, along with explicit examples of rearrangement models in the oriented region case, may be found in \cite{vjoshjez}; a deeper algebraic consideration of rearrangements is given in \cite{joshvjez}. 
}

%%%%%%%%%%%%%%%%%%%%%%%%%%

\section{More general models of genomes}\label{sec:gen}

The construction of the genome algebra $\cA=\zee\CC[\SN]$ in Section~\ref{sec:A} was determined by assumptions we made about how to model the genomes. In particular, following on from previous work \cite{mles,jezandpet,vjez_circ1}, we chose to model circular genomes, without considering orientation of regions, which meant an instance of the genome could be represented by a permutation $\sigma\in\SN$. We \bt{modelled the genomes without a distinguished position}, which meant the genome symmetries corresponded to the dihedral group $\DN$. In this section, we outline how the constructions and techniques presented for this specific case can be generalised to cover different genomic models\bt{: any for which genome (and rearrangement) instances can be represented as elements of a group $G$}.

Suppose, for example, that one wanted to vary the above model to include an origin of replication in the circular genomes. We would model this as a distinguished position and the genomes would then have no rotational symmetry, only reflectional. The symmetry group would thus be $Z_N = \{e,f\}$, the symmetry element $\zee=\tfrac{1}{2}(e+f)$, and each genome an element $\zee\sigma = \tfrac{1}{2}(\sigma+f\sigma) \in\zee\CC[\SN]$. The model would naturally reflect this symmetry, with rearrangements taking the form $\zee a$, $a\in\SN$. \bt{In particular, this case allows for rearrangements at different positions on the genome, relative to the origin of replication, to be assigned different probabilities.}
With appropriate choices of rearrangements, this framework could also be used to represent linear genomes. 

To include orientation of genes, one would use a different underlying group, for example the hyperoctahedral group $H_N$ of signed permutations (as outlined in \cite{attilaand}), and a symmetry group of choice (for example, a copy of the dihedral group in the case of a circular genome with no distinguished positions).  \bt{An explicit consideration of the genome algebra for the signed region case, including some detailed examples, may be found in \cite{vjoshjez}.}

To construct the general genome algebra, we begin with a group $G$, whose elements represent instances of the genomes of interest, and a subgroup $Z\subseteq G$ that represents the physical symmetries of these genomes.\footnote{\bt{To model genomes as possessing no symmetries, one takes the symmetry group to be trivial; this case thus fits within the framework, but each genome simply corresponds to a single permutation.}} We consider \bt{rearrangements such that a single rearrangement event} for an instance $g\in G$ of a genome can be modelled via the left action of a particular element $a\in G$ on $g$.  
The terms in the following definition reflect our applications of the objects, but obviously the subsequent results concerning the algebras hold whether or not one applies them to genomes.

\begin{definition}
Let $G$ be a finite group with subgroup $Z\subseteq G$. Define
\[ \zee:= \tfrac{1}{|Z|}\ds\sum_{z\in Z} z\,, \quad \cA:= \zee\CC[G] \quad \textrm{ and } \quad \cA_0:= \zee\CC[G]\zee\,.\]
We call $\cA$ the {\em genome algebra of $G$ with $Z$}, $\cA_0$ the {\em class algebra of $G$ with $Z$} and $\zee$ the {\em symmetry element} of $\cA$ and $\cA_0$. 
\end{definition}

Rather than proceeding as in Sections~\ref{sec:perms} and \ref{sec:A}, where we first defined the rearrangement model, path probabilities and likelihoods for genome instances (group elements) and then showed that the calculations could be performed in the genome algebra, we will here formulate these concepts (and then perform the computations) entirely in the genome algebra $\cA$. We include the class algebra $\cA_0$ for completeness. Following the observations in the previous section, it seems a natural next step to consider the algebra formed by combining together the elements of $\cA$ that act indistinguishably. However, we shall see that this lower dimensional algebra is not the appropriate setting for our calculations. 

\begin{lemma}\label{lem:Z-G-classes}
Let $G$ be a finite group with subgroup $Z\subseteq G$. 
\begin{enumerate}
\item[(i)] For each $g\in G$, define $[g]: = \{zg : z\in Z\}$. Then the sets $\{[g] : g\in G\}$ are equivalence classes of $G$. For each $g\in G$, $\left|[g]\right| = |Z|$.
\item[(ii)] For each $g\in G$, define $[g]_D: = \{zgz^{\prime} : z,z^{\prime}\in Z\}$. Then the sets $\{[g]_D : g\in G\}$ are equivalence classes of $G$.
\end{enumerate}
\end{lemma}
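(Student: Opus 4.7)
The plan is to recognise the sets in (i) as the left cosets of $Z$ in $G$, and the sets in (ii) as the double cosets $ZgZ$. Both partitions of a group by cosets of a subgroup are standard, so the proof reduces to verifying the relevant equivalence relations explicitly, using only that $Z$ is a subgroup (closure, identity, inverses).

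For (i), I would define the relation $g \sim g'$ iff $g' \in [g]$, i.e., $g' = zg$ for some $z \in Z$. Reflexivity is immediate since $e \in Z$; symmetry follows from $g = z^{-1}g'$ with $z^{-1} \in Z$; transitivity follows from $z,z' \in Z \Rightarrow z'z \in Z$. Hence the sets $[g]$ partition $G$ and are equivalence classes. For the size, the map $Z \to [g]$ given by $z \mapsto zg$ is surjective by definition and injective because right multiplication by $g$ is a bijection of $G$, so $|[g]| = |Z|$.

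For (ii), I would repeat the same pattern for the relation $g \sim_D g'$ iff $g' = zgz'$ for some $z,z' \in Z$. Reflexivity uses $g = ege$; symmetry uses $g = z^{-1}g'(z')^{-1}$ with $z^{-1},(z')^{-1} \in Z$; transitivity uses closure of $Z$ on both sides: if $g' = z_1 g z_2$ and $g'' = z_3 g' z_4$, then $g'' = (z_3z_1)g(z_2z_4)$ with $z_3z_1, z_2z_4 \in Z$. Therefore the $[g]_D$ also partition $G$ and are equivalence classes.

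No step is substantively hard; the only mild subtlety is that, unlike in (i), the double cosets $[g]_D$ need not all have the same cardinality (they do not in general form cosets of a single subgroup), which is why the size statement in (i) has no counterpart in (ii). This matches the observation made in the previous section that the basis elements $\zee\sigma\zee$ of $\cA_0$ correspond to classes of varying sizes.
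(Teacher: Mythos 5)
Your proposal is correct and follows essentially the same route as the paper, which simply observes that the $[g]$ are the (right) cosets $Zg$ and the $[g]_D$ are the double cosets $ZgZ$ and declares the partition property clear; you merely write out the standard equivalence-relation and counting verifications. One small terminological slip: $Zg = \{zg : z\in Z\}$ is conventionally a \emph{right} coset of $Z$ in $G$ (as the paper states), not a left coset, though this has no bearing on the argument.
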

\begin{proof}
Since the sets $[g]$ and $[g]_D$ for $g\in G$ are respectively right cosets and double cosets of $G$ with respect to the subgroup $Z$, it is clear that they are equivalence classes.
\end{proof}

We use the label `$D$' for the classes defined in (ii) above to refer to the double coset structure of the sets $[g]_D$ (noting that this conveniently coincides with the original usage \cite{vjez_circ1} of the label, which referred to the dihedral symmetry in the circular genome case). 
The following statements can be derived directly from the subgroup properties of $Z$ and Lemma \ref{lem:Z-G-classes} (c.f. the corresponding results in Section \ref{sec:A}).

\begin{proposition}\label{prop:gen-bases}
Let $G$ be a finite group with subgroup $Z\subseteq G$. Let $\cA$ and $\cA_0$ respectively be the genome algebra and the class algebra of $G$ with $Z$ and $\zee$ the symmetry element of $\cA$ and $\cA_0$. Then 
\begin{enumerate}
\item[(i)] $\zee$ is idempotent, $\zee$ is a left identity in $\cA$, and $\zee$ is the identity in $\cA_0$;
\item[(ii)] $\cA$ has a basis of the form $\{\zee g : g\in G\} = \{ \zee g_1, \ldots , \zee g_K\}$ and 
\[ K := \dim(\cA) = \big|\{[g] : g\in G\}\big|= \tfrac{|G|}{|Z|}\,;\]
\item[(iii)] $\cA_0$ has a basis of the form $\{\zee g \zee: g\in G\} = \{ \zee g_1\zee, \ldots , \zee g_L \zee\}$ and 
\[ L := \dim(\cA_0) = \big|\{[g]_D : g\in G\}\big|\,.\] \qed
\end{enumerate}
\end{proposition}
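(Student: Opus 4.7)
The plan is to generalise the arguments already carried out for the specific case $G=\SN$, $Z=\DN$ in Propositions~\ref{prop:comm.idem}, the discussion around (\ref{eq:basis-A}), and the opening of Section~\ref{sec:A0}. In each instance, the only property of $\DN$ that was used was that it is a subgroup of the ambient group, so the arguments carry over verbatim after changing notation; the statements to verify are (i)~idempotency and identity properties of $\zee$, and (ii)--(iii) a basis/dimension count for $\cA$ and $\cA_0$.

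For part (i), I would first establish idempotency by the same calculation as Proposition~\ref{prop:comm.idem}(i): since $Z$ is a group, $Zz = Z$ for every $z \in Z$, so
\[ \zee^2 = \tfrac{1}{|Z|^2}\sum_{z\in Z}\sum_{z'\in Z}zz' = \tfrac{1}{|Z|^2}\sum_{z\in Z}\sum_{z'\in Z}z' = \zee. \]
That $\zee$ is a left identity on $\cA=\zee\CC[G]$ is then immediate: for any $\zee\bm{\tau}\in\cA$, $\zee(\zee\bm{\tau})=\zee^2\bm{\tau}=\zee\bm{\tau}$. Analogously, $\zee$ is a two-sided identity on $\cA_0=\zee\CC[G]\zee$ by applying idempotency on both sides.

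For part (ii), every element of $\cA$ is of the form $\zee\bm{\tau}$ for some $\bm{\tau}\in\CC[G]$ and hence a linear combination of elements of the form $\zee g$ with $g\in G$, so such elements span $\cA$. The crux is then the identification $\zee g_1=\zee g_2 \iff [g_1]=[g_2]$: the forward direction follows because the equality $\tfrac{1}{|Z|}\sum_{z\in Z}zg_1=\tfrac{1}{|Z|}\sum_{z\in Z}zg_2$ in $\CC[G]$ forces equality of the underlying supports $Zg_1=Zg_2$ (using that $G$ is a basis of $\CC[G]$), hence $g_1\in[g_2]$; the converse uses $g_1=zg_2$ for some $z\in Z$ together with $Zz=Z$ to get $\zee g_1=\zee g_2$. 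Choosing one representative per equivalence class therefore yields a spanning set of linearly independent elements, and Lemma~\ref{lem:Z-G-classes}(i) gives $K=|G|/|Z|$.

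For part (iii), the argument is the same in structure: every element of $\cA_0$ is a linear combination of $\zee g\zee$ for $g\in G$, and the only non-routine step is to show $\zee g_1\zee=\zee g_2\zee \iff [g_1]_D=[g_2]_D$. The forward direction again comes from equating the support sets $\tfrac{1}{|Z|^2}\sum_{z,z'\in Z}zg_1z'$ and $\tfrac{1}{|Z|^2}\sum_{z,z'\in Z}zg_2z'$ in $\CC[G]$, which forces $Zg_1Z=Zg_2Z$, i.e.\ $g_1\in[g_2]_D$; the converse uses $g_1=zg_2z'$ together with the two-sided invariance $Zz=zZ'=Z$ to absorb the factors into $\zee$. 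The slight subtlety that deserves care is that the classes $[g]_D$ need not have equal size (as was already noted in Section~\ref{sec:A0}), so unlike in (ii) we cannot compute $L$ by a simple division; we can only write $L=|\{[g]_D:g\in G\}|$, which is the statement of the proposition. That is the one place where the double-coset generalisation behaves differently from the single-coset case, and it is the main point to flag in the write-up.
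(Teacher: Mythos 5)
Your proposal is correct and follows exactly the route the paper intends: the paper states this proposition with no written proof, remarking only that it "can be derived directly from the subgroup properties of $Z$ and Lemma~\ref{lem:Z-G-classes}" by analogy with Sections~\ref{sec:A} and \ref{sec:A0}, and your argument is precisely that generalisation (idempotency via $Zz=Z$, the coset/double-coset identification of distinct basis elements via equality of supports in $\CC[G]$, and the correct caveat that the double cosets $[g]_D$ need not have uniform size). No gaps worth noting.
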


For the remainder of this section, we fix a basis for each of $\cA$ and $\cA_0$, as defined in (ii) and (iii) above.  

\begin{remark}
Each equivalence class $[g]_D$ can be viewed as an orbit of $G$ under an action of the group $Z\times Z$ and thus, from (iii) above, the dimension $L$ of $\cA_0$ may be calculated via Burnside's lemma \cite[Prop. 29.4]{James_Liebeck}. By combining this with the dual orthogonality relations on the group $G$, one can directly obtain the dimension result stated below in Theorem~\ref{thm:gen-chars-etc} (i). 
\end{remark}

We note that working `entirely' in the genome algebra does not mean that we forget about the group $G$. In practice, one would observe a genome with a particular orientation and reference frame, thus as an instance $g\in G$, and then identify the genome as the cloud $\zee g\in\cA$ for the purposes of computation. There are $K=\tfrac{|G|}{|Z|}$ distinct genomes, corresponding to the distinct basis elements $\zee g$ of the genome algebra $\cA$.  Similarly, one would conceive a rearrangement initially as an instance $a\in G$ and then lift to $\zee a$ in the genome algebra.
Considering all orientations of a rearrangement instance on all orientations of a genome corresponds to a left action of the algebra $\cA$ on itself. Distinct elements of $\cA$ that correspond to the same element of $\cA_0$ act indistinguishably, since
\begin{equation}\label{eq:zee a zee} (\zee a) \cdot (\zee g) = (\zee a\zee) \cdot (\zee g)\,.\end{equation}
Thus there are $L = \dim(\cA_0)$ distinct rearrangement actions.\footnote{We note that most of these mathematically possible rearrangements would not correspond to biologically plausible ones, so would not appear in rearrangement models in practice. \bt{For a deeper consideration of biologically plausible rearrangements from an algebraic perspective, see \cite{joshvjez}.}}

The (left) regular representations $\rho_{\reg}^{\cA}$ of the genome algebra $\cA$ and $\rho_{\reg}^{0}$ of the class algebra $\cA_0$ can be constructed in the usual way (c.f. (\ref{eq:regrepA})) via the bases fixed above. As in Section~\ref{sec:A}, one readily verifies that $\rho^{\cA}_{\reg}(\zee)$ is the $K\times K$ identity matrix and that $\rho^{\cA}_{\reg}(\zee g)^T = \rho^{\cA}_{\reg}(\zee g^{-1})$. Since $\zee$ is an identity in $\cA_0$, $\rho^0_{\reg}(\zee)$ is the $L\times L$ identity matrix.  In this case, the equivalence classes $[g]_D$ need not be the same size and thus, in general, $\rho^0_{\reg}(\zee g\zee)^T \neq \rho^0_{\reg}(\zee g^{-1}\zee)$.\footnote{One can verify via a simple counting argument that
$\big(\rho^0_{\reg}(\zee g\zee)\big)_{ij} 
		= \tfrac{\left|[g_i]_D\right|}{\left|[g_j]_D\right|}\big(\rho^0_{\reg}(\zee g^{-1}\zee)\big)_{ji}$.}
We denote the regular characters of $\cA$ and $\cA_0$ by $\chi_{\reg}^{\cA}$ and $\chi_{\reg}^0$ respectively and note that these take real values on any algebra element that is a real linear combination of basis elements.

Recall that, by Maschke's theorem \cite[Thm. 4.1.1]{etingof_book}, the group algebra $\CC[G]$ of any finite group $G$ can be written as a direct sum over its irreducible modules.

\begin{theorem}\label{thm:gen-chars-etc}
Let $G$ be a finite group with subgroup $Z\subseteq G$. Denote the distinct irreducible submodules of $\CC[G]$ by $V_i$, with $\dim(V_i)=D_i$ for each $i$ so that
\[ \CC[G] \cong \bigoplus_{i=1}^M D_i V_i\,, \]
and denote the corresponding irreducible representations and characters of $\CC[G]$ by $\rho_i$ and $\chi_i$ respectively. 
Then the following hold. 
\begin{enumerate}
\item[(i)] For each $i$, $W_i := \zee\cdot V_i$ is either $\{0\}$ or an irreducible  $\cA_0$-module and 
\begin{equation}\label{eq:decomp-gen-A0} \cA_0 \cong \bigoplus_{\substack{1\leq i\leq M\\ W_i\neq\{0\}}} k_i W_i \,,\end{equation}
with $k_i = \dim(W_i) = \chi_i(\zee)$ for each $i$. Thus, $\dim(\cA_0) = L = \ds\sum_{i=1}^M \chi_i(\zee)^2$.
\item[(ii)]  The modules $\{W_i : 1\leq i\leq M, W_i\neq\{0\} \}$ comprise all irreducible modules of $\cA$. Denoting the corresponding irreducible representations of $\cA$ and $\cA_0$ by $\rho_i^{\cA}$ and $\rho_i^{0}$ respectively,  
\[ \rho_i^{\cA}(\zee g) = \rho_i^{\cA}(\zee g\zee)  = \rho_i^{0}(\zee g\zee)\,,\]
for all  $g\in G$ and all (relevant) $1\leq i\leq M$. Thus for all $\bm{g}\in\cA_0$, $ \rho_i^{\cA}(\bm{g}) = \rho_i^{0}(\bm{g})$.
\item[(iii)] Denoting the corresponding characters of $\cA$ and $\cA_0$ by $\chi_i^{\cA}$ and $\chi_i^{0}$ respectively, and defining $\chi_i^{\cA} = \chi_i^0 \equiv 0$ for each $i$ such that $W_i=\{0\}$,
\[  \chi_i(\zee g) = \chi_i^{\cA}(\zee g)  = \chi_i^{0}(\zee g\zee)\,,\]
for all  $g\in G$ and all $1\leq i\leq M$.  Thus the characters $\chi_i, \chi_i^{\cA}$ and $\chi_i^0$ coincide on $\cA_0$.
\item[(iv)] For all $\bm{g}\in\cA$,
\[ \chi_{\reg}^{\cA}(\bm{g}) = \sum_{i=1}^M D_i \chi_i^{\cA}(\bm{g}) = \chi_{\reg}(\bm{g})\,,\]
where $\chi_{\reg}$ and $\chi_{\reg}^{\cA}$ denote the regular characters of $\CC[G]$ and $\cA$ respectively.
\item[(v)]  For any $\bm{g}\in\cA$, $\left(\tfrac{1}{K}\!\cdot \!\chi^{\cA}_{\reg}(\bm{g})\right)$ is the coefficient of $\zee$ in $\bm{g}$.
\end{enumerate}
\end{theorem}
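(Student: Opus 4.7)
The plan is to recycle, essentially verbatim, the arguments from Sections~\ref{sec:A} and~\ref{sec:A0}, since those proofs used only four ingredients that remain available in the present generality: (a) $Z\leq G$ is a subgroup, so $\zee$ is idempotent (short direct computation); (b) $\CC[G]$ is semisimple by Maschke's theorem; (c) $\zee$ is a left identity in $\cA$ and a two-sided identity in $\cA_0$; and (d) the dual orthogonality relations
\[ \sum_{i=1}^M \chi_i(g)\overline{\chi_i(h)} = \delta(g,h)\,|\mathsf{cent}_G(g)| \]
hold on $G$. Throughout, I would keep careful track of which claims live in $\cA_0$ (a semisimple algebra) versus $\cA$ (which in general is not semisimple), since the standard idempotent-truncation theory applies cleanly to $\zee\CC[G]\zee$ rather than to $\zee\CC[G]$.

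For part (i), the idempotency of $\rho_i(\zee)$ forces its eigenvalues to lie in $\{0,1\}$, so $k_i := \dim(\zee\cdot V_i)$ equals the multiplicity of $1$, which in turn is the trace $\chi_i(\zee)$. That each non-zero $W_i$ is an irreducible $\cA_0$-module, that these exhaust the irreducibles of $\cA_0$, and that $\cA_0\cong\bigoplus k_i W_i$, follow from the general theory of idempotent truncations of semisimple algebras \cite[Prop.~4.18 and Thm.~4.23]{steinberg_book}; summing dimensions then yields $L=\sum_i \chi_i(\zee)^2$. For part (ii), the irreducibility argument of Theorem~\ref{thm:Adecomp} transfers verbatim: given $0\neq u$ in a submodule of $W_i$, acting by $\{\zee g: g\in G\}$ generates $\zee\cdot V_i=W_i$ by irreducibility of $V_i$ over $\CC[G]$; exhaustion is again \cite[Thm.~4.23]{steinberg_book}. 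The threefold identity $\rho_i^{\cA}(\zee g)=\rho_i^{\cA}(\zee g\zee)=\rho_i^{0}(\zee g\zee)$ is immediate because $\zee$ acts as the identity on any irreducible $\cA$-module, and the $\cA_0$-action on $W_i$ is the restriction of the $\cA$-action. Part (iii) then follows by taking traces: repeating the argument of Proposition~\ref{prop:chars_equal} in an orthonormal eigenbasis of $\rho_i(\zee)$, the sum over the eigenvalue-$1$ block equals $\chi_i(\zee g\zee)=\chi_i(\zee g)$ by idempotency of $\zee$ and cyclicity of trace.

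Parts (iv) and (v) I would prove in tandem, establishing (v) first. Fixing the basis $\{\zee g_1,\ldots,\zee g_K\}$ for $\cA$ with $\zee g_1=\zee$, the counting computation of~\eqref{eq:regcharA} generalizes directly: the $i$th diagonal entry of $\rho_{\reg}^{\cA}(\zee g)$ is $\tfrac{1}{|Z|}|\{z\in Z:\, gz g_i\in[g_i]\}|$, and summing over $i$ together with a change of variables shows that $\chi_{\reg}^{\cA}(\zee g)$ equals $K$ when $g\in Z$ (equivalently $\zee g=\zee$) and $0$ otherwise. Linearity then gives (v). For (iv), I would follow the dual-orthogonality computation in Proposition~\ref{prop:regrepA}(ii): expand
\[ \sum_{i=1}^M D_i \chi_i(\zee g) = \tfrac{1}{|Z|}\sum_{z\in Z}\sum_{i=1}^M \chi_i(e)\overline{\chi_i(g^{-1}z)}, \]
apply (d) to collapse the inner sum to $\delta(e,g^{-1}z)|G|$, and observe that the result matches the piecewise value of $\chi_{\reg}^{\cA}$ computed in (v). Combined with (iii) to promote $\chi_i$ to $\chi_i^{\cA}$ on all of $\cA$, and with the classical $\chi_{\reg}=\sum_i D_i\chi_i$ on $\CC[G]$, this yields all equalities in (iv).

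The hardest part is not any single calculation but the bookkeeping around the distinction between $\cA$ and $\cA_0$: the decomposition \eqref{eq:decomp-gen-A0} and the exhaustion of irreducibles are inherited from Steinberg's theorems applied to $\zee\CC[G]\zee$, and must then be lifted back to $\cA$ in (ii) using only the left-identity property of $\zee$, without claiming semisimplicity of $\cA$ itself. The other mild subtlety is that (iv) asserts $\chi_{\reg}^{\cA}=\chi_{\reg}|_{\cA}$ on all of $\cA$ (not merely on $\cA_0$); this needs the full strength of (iii) for arbitrary $\zee g$ rather than the easier statement for $\zee g\zee$.
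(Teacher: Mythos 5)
Your proposal is correct and follows essentially the same route as the paper, whose entire proof consists of the remark that the results are ``verified just as for the particular case $G=\SN$, $Z=\DN$.'' The only point the paper makes explicit that you leave implicit is that the orthonormal eigenbasis of $\rho_i(\zee)$ you invoke for (iii) exists because the $\rho_i$ can be chosen unitary (not orthogonal, as was possible for $\SN$), making $\rho_i(\zee)$ self-adjoint; correspondingly one must use conjugate transposes of the (possibly non-real) eigenvectors throughout.
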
 
\begin{proof}
For (i), we use \cite[Prop. 4.18, Thm. 4.23]{steinberg_book}. For the remaining results, we use the observation (\ref{eq:zee a zee}) and proceed just as for the corresponding results in Section \ref{sec:A}. Note that in this general setting we cannot assume that the irreducible representations $\rho_i$ are orthogonal on $G$, but we can choose them to be unitary \cite[Thm. 4.6.2]{etingof_book}. This means that the corresponding irreducible representations of $\zee$ in $\CC[G]$ are self adjoint, and thus each $\rho_i(\zee)$ is unitarily diagonalisable, so that its eigenvectors form an orthonormal basis for $\CC^{D_i}\cong V_i$. Since the eigenvectors need not be real, the only difference in the proofs is that we need the conjugate transposes, not just transposes, of these vectors.
\end{proof}

The above results imply that $\cA_0 \cong \cA/\Rad(\cA)$ \cite[Thm. 3.5.4]{etingof_book}, which formalises the relationship between the genome algebra and the class algebra: $\cA_0$ is obtained from $\cA$ by factoring out the elements of $\cA$ that act trivially. We have previously expressed this as $\cA_0$ combining together the elements of $\cA$ that act indistinguishably. Another aspect of this is the following.

\begin{corollary}\label{coro:reps-same-gen}
Let $G$ be a finite group with subgroup $Z\subseteq G$. For any $g\in G$ and each irreducible representation $\rho^{\cA}_i$ of $\cA$, $\rho_i^{\cA}(\zee g) = \rho_i^{\cA}(\zee g^{\prime})$ for all $g^{\prime}\in[g]_D$. For any $g,g'\in G$, $\rho_{\reg}^{\cA}(\zee g) = \rho_{\reg}^{\cA}(\zee g^{\prime})$ if and only if $g^{\prime}\in[g]_D$. \qed
\end{corollary}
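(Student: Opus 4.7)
The plan is to reduce both claims to statements about the element $\zee g \zee \in \cA_0 \subseteq \CC[G]$, using Theorem~\ref{thm:gen-chars-etc}(ii) together with a basic support argument in the group algebra. The one preliminary fact I would record first is that $z\zee = \zee z = \zee$ for every $z \in Z$, which follows immediately from $zZ = Zz = Z$ and the definition of $\zee$. Consequently, for any $g' = z_1 g z_2 \in [g]_D$,
\[ \zee g' \zee = (\zee z_1)\, g\, (z_2 \zee) = \zee g \zee \]
as elements of $\CC[G]$.

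For the first claim, Theorem~\ref{thm:gen-chars-etc}(ii) asserts that $\rho_i^{\cA}(\zee g) = \rho_i^{\cA}(\zee g \zee)$ for each irreducible representation of $\cA$. Combined with $\zee g \zee = \zee g' \zee$ from the preliminary identity, this yields $\rho_i^{\cA}(\zee g) = \rho_i^{\cA}(\zee g \zee) = \rho_i^{\cA}(\zee g' \zee) = \rho_i^{\cA}(\zee g')$.

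For the second claim, I would argue each direction separately. The reverse implication is formal: the regular representation encodes left multiplication on basis elements via $(\zee g)(\zee g_j) = \zee g \zee g_j$, so $\rho_{\reg}^{\cA}(\zee g)$ depends on $g$ only through $\zee g \zee$, and the preliminary identity gives equality of matrices whenever $g'\in[g]_D$. For the forward direction, suppose $\rho_{\reg}^{\cA}(\zee g) = \rho_{\reg}^{\cA}(\zee g')$. Applying both matrices to the distinguished basis vector $\zee$ (which is in the basis since $[e] = Z$ is one of the equivalence classes) gives $\zee g \zee = \zee g' \zee$ in $\cA$, and hence in $\CC[G]$. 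Now, viewed in the group algebra,
\[ \zee g \zee = \tfrac{1}{|Z|^2}\sum_{z_1, z_2 \in Z} z_1 g z_2 \,, \]
which is a positive combination supported precisely on the double coset $[g]_D$. Equality of two such positive combinations forces $[g]_D = [g']_D$, so $g' \in [g]_D$.

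The only step with real content is this forward direction, which turns on the observation that while $\zee g \zee$ lives in the small algebra $\cA_0$, its expansion back in $\CC[G]$ is supported exactly on $[g]_D$ with strictly positive coefficients, so the equivalence class can be recovered from the algebra element. Everything else is a direct transcription of the dihedral argument (Proposition~\ref{prop:regAnot1-1}) to the general subgroup $Z \subseteq G$, aided by Theorem~\ref{thm:gen-chars-etc} which packages the representation-theoretic consequences of idempotency and commutation with $\zee$.
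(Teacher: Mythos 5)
Your proposal is correct and follows essentially the same route as the paper: the first claim is Theorem~\ref{thm:gen-chars-etc}(ii) combined with $\zee g'\zee = \zee g\zee$ for $g'\in[g]_D$, and the second claim is the direct generalisation of Proposition~\ref{prop:regAnot1-1}, recovering $\zee g\zee = \zee g'\zee$ by applying the regular representation matrices to the basis element $\zee$ and then using that the elements $\zee g\zee$ are in bijection with the double cosets $[g]_D$. Your explicit support argument in $\CC[G]$ is a slightly more detailed justification of that last bijection than the paper records, but it is the same underlying fact.
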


\begin{remark}\label{rem:semigroups}
We note that \cite[Prop. 7.14]{steinberg_book} implies part of Theorem~\ref{thm:gen-chars-etc} (iii) (namely, that $\chi_i^0(\bm{g})=\chi_i(\bm{g})$ for $\bm{g}\in\cA_0$) in the more general case of $G$ being a semigroup. Extending the framework to algebras based on semigroups would allow us to model \bt{further types of rearrangements, such as} insertions and deletions \cite{andrew14}, and we intend to investigate this possibility in future work.  
\end{remark}

We are ready to proceed with the formulation of path probabilities and the likelihood function within the genome algebra $\cA$. Firstly, we formally define a biological model for evolution in the genome algebra to be $(\M,w,dist)$, where
\begin{equation}\label{eq:model-A}
 \M:= \{ \zee a_1\zee, \zee a_2\zee , \ldots , \zee a_q\zee \} \subseteq \cA \,,
\end{equation}
for some $a_1,\ldots, a_q \in G$, $w: \M\to (0,1)$ is the probability distribution on $\cM$, and $dist$ is the probability distribution of rearrangement events in time. Note that we have used the form $\zee a\zee$ rather than $\zee a$ to avoid duplicating rearrangements in the model (that is, elements $\zee a, \zee a' \in\cA$ that are distinct but have the same left action, c.f. Examples~\ref{eg:zsigz} and \ref{eg:models}). Presently, we shall also add the condition that the model be {\em reversible}, that is, that $\zee a^{-1}\zee\in\cM$  for every $\zee a\zee\in\cM$, and $w(\zee a^{-1}\zee) = w(\zee a\zee)$.

We fix the reference genome to be $\zee\in\cA$ (whose instances are the elements of $Z$, in particular $e\in Z$). Then, for any target genome $\zee g$ (where we have observed the instance $g\in G$) and each $k\in\NN_0$, we define the path probability $\alpha_k(\zee g)$ to be
\[ \alpha_k(\zee g):= P(\zee \mapsto \zee g \textrm{ via } k \textrm{ rearrangements})\,, \]
where $\zee\mapsto\zee g$ means ``genome $\zee$ is transformed into genome $\zee g$''.
As usual, to find the path probability for an arbitrary genome $\zee h$ to be transformed into target genome $\zee g$, we can simply translate to the reference; that is, this is exactly the path probability for $\zee$ to be rearranged into $\zee gh^{-1}$, which is $\alpha_k(\zee g h^{-1})$.

Given a model $\M$, we define the corresponding {\em model element} of $\cA$ to be 
\[ \Aess:= \ds\sum_{i=1}^q w(\zee a_i\zee) \zee a_i\,.\]
We write $\Aess$ to distinguish the model element here from the previous definition of $\ess$ in the group algebra (c.f. Definition~\ref{def:mod-sym-els}), and choose to sum over rearrangements of the form $\zee a_i$ rather than $\zee a_i \zee$ for simplicity (recall from (\ref{eq:zee a zee}), these have the same action, so either form may be used).

It remains to connect the path probabilities to the regular character of powers of the model element (c.f (\ref{eq:s^k}) in Section~\ref{sec:perms}).  Recall from Section~\ref{subsec:markov} that $\zee a \cdot \zee g$ gives a convex combination of genomes, that is, 
\[ \zee a \cdot \zee g = \sum_{i=1}^K \texttt{p}_i \zee g_i \,,\]
where $\{\zee g_1, \ldots , \zee g_K\}$ is our fixed basis for $\cA$ and each $\texttt{p}_i$ is the proportion of the expansion of $\zee a \zee g$ that is equal to $\zee g_i$ or, equivalently, the probability that the rearrangement $\zee a$ acting on the genome $\zee g$ will result in the genome $\zee g_i$. Thus  
\[\Aess \cdot \zee = \sum_{j=1}^q w(\zee a_j \zee) \zee a_j \zee = \sum_{j=1}^q w(\zee a_j \zee) \sum_{i=1}^K \texttt{p}_{j,i} \zee g_i 
						=  \sum_{i=1}^K \left(\sum_{j=1}^q w(\zee a_j \zee)\texttt{p}_{j,i}\right) \zee g_i\,,\]
where we have rearranged and collected terms in the final step so that, for each $i$, $\sum_{j=1}^q w(\zee a_j \zee)\texttt{p}_{j,i}$ is the total probability that the genome $\zee$ will be transformed into the genome $\zee g_i$ via some (single) rearrangement chosen from the model. Thus
\[  \Aess \cdot \zee =  \sum_{i=1}^K \alpha_1(\zee g_i) \zee g_i \]
and, by repeatedly applying $\Aess$, one sees that 
\begin{equation}\label{eq:sk-gen}  \Aess^k \zee =  \sum_{i=1}^K \alpha_k(\zee g_i) \zee g_i\,. \end{equation}
Now, for $g\in G$ an instance of the genome of interest, multiply (\ref{eq:sk-gen}) on the right by $g^{-1}$ to obtain
\[ \Aess^k \zee g^{-1} 	=  \sum_{i=1}^K \alpha_k(\zee g_i) \zee g_i g^{-1} \,.\]
Since $\zee g_i g^{-1} = \zee$ if and only if $\zee g_i =\zee g$, we see that $\alpha_k(\zee g)$ is the coefficient of $\zee$ in the expansion of 
$\Aess^k \zee g^{-1}$, and thus
\begin{equation}\label{eq:gen-alpha-is-char}
\alpha_k(\zee g)  = \tfrac{1}{K}\!\cdot\chi_{\reg}^{\cA}\left(\Aess^k \zee g^{-1}\right)
\end{equation}
by Theorem~\ref{thm:gen-chars-etc} (v).

Theorem \ref{thm:gen-chars-etc} allows us to decompose the regular character in (\ref{eq:gen-alpha-is-char}) into irreducible characters, however we also will need to diagonalise the irreducible representation matrices. 

\begin{lemma}\label{lem:gen-diag-ess}
Let $G$ be a finite group with subgroup $Z\subseteq G$. Let $(\cM,w,dist)$ be a biological model for evolution of genomes represented by elements $\zee g\in\cA$ of the genome algebra (where $g\in G$) and let $\Aess\in\cA$ be the corresponding model element. If the model is reversible, then the following hold.
\begin{enumerate}
\item[(i)] The irreducible representation matrices of the model element $\Aess$ in $\cA$ are diagonalisable.
\item[(ii)] The regular representation of $\Aess$ in $\cA$ is symmetric. 
\end{enumerate}
\end{lemma}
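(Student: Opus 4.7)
The plan is to reduce both parts to a single observation: under the natural conjugate-linear involution $\left(\sum_g c_g g\right)^{\!*} := \sum_g \overline{c_g}\, g^{-1}$ on $\CC[G]$, the element $\tilde{\ess} := \zee\ess\zee = \sum_k w_k\, \zee a_k\zee \in \cA_0$ is self-adjoint. Two preliminary observations would underpin this reduction. First, because $Z$ is closed under inversion, $\zee^* = \zee$; combined with model reversibility (which pairs each $\zee a_k\zee\in\cM$ with $\zee a_k^{-1}\zee\in\cM$ of the same weight), this gives $\tilde{\ess}^* = \sum_k w_k\, \zee a_k^{-1}\zee = \tilde{\ess}$. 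Second, since $\zee$ is a left identity on $\cA$ and fixes each basis element $\zee g_j$, we have $(\zee\ess)(\zee g_j) = (\zee\ess\zee)(\zee g_j)$, hence $\rho_{\reg}^{\cA}(\zee\ess) = \rho_{\reg}^{\cA}(\tilde{\ess})$ and $\rho_i^{\cA}(\zee\ess) = \rho_i^{\cA}(\tilde{\ess})$ for each $i$. It therefore suffices to prove both claims with $\tilde{\ess}$ in place of $\zee\ess$.

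For (i), I would choose each irreducible representation $\rho_i$ of $G$ to be unitary, so that $\rho_i(g^{-1}) = \rho_i(g)^*$ and consequently $\rho_i(\bm{\tau}^*) = \rho_i(\bm{\tau})^*$ for all $\bm{\tau}\in\CC[G]$. The identity $\tilde{\ess}^* = \tilde{\ess}$ then yields that $\rho_i(\tilde{\ess})$ is Hermitian. By Theorem~\ref{thm:gen-chars-etc}(ii), the matrix $\rho_i^{\cA}(\tilde{\ess})$ is the compression $Q_i^*\rho_i(\tilde{\ess})Q_i$, where the columns of $Q_i$ form an orthonormal basis for the invariant subspace $W_i = \zee\cdot V_i$. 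A Hermitian matrix restricted to an invariant subspace remains Hermitian, and is therefore unitarily diagonalisable.

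For (ii), I would establish by a direct combinatorial identity that, for every $g\in G$, $\rho_{\reg}^{\cA}(\zee g\zee)^T = \rho_{\reg}^{\cA}(\zee g^{-1}\zee)$. Unpacking $(\zee g\zee)(\zee g_j) = \tfrac{1}{|Z|^2}\sum_{z_1,z_2\in Z} z_1 g z_2 g_j$ and converting coefficients from $\CC[G]$ to $\cA$, the $(i,j)$-entry of $\rho_{\reg}^{\cA}(\zee g\zee)$ equals $\tfrac{1}{|Z|}\bigl|\{(z_1,z_2)\in Z^2 : z_1 g z_2 = g_i g_j^{-1}\}\bigr|$. Inverting $z_1 g z_2 = g_i g_j^{-1}$ and setting $z_1' = z_2^{-1}$, $z_2' = z_1^{-1}$ defines a bijection $Z^2\to Z^2$ onto pairs satisfying $z_1' g^{-1} z_2' = g_j g_i^{-1}$, which counts the $(j,i)$-entry of $\rho_{\reg}^{\cA}(\zee g^{-1}\zee)$. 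Extending linearly and invoking reversibility to match $\sum_k w_k\, \zee a_k^{-1}\zee$ with $\tilde{\ess}$ gives $\rho_{\reg}^{\cA}(\tilde{\ess})^T = \rho_{\reg}^{\cA}(\tilde{\ess})$; and since the entries are real (nonnegative rational multiples of the real weights), the matrix is real symmetric.

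The main obstacle is the combinatorial bookkeeping in (ii) and the care required in passing between $\CC[G]$, $\cA$ and $\cA_0$, and between transpose and conjugate transpose when one upgrades from the orthogonal representations available in the dihedral case to the unitary ones needed here. Once the reduction $\zee\ess\leftrightarrow\tilde{\ess}$ is installed and the irreducible representations of $G$ are chosen unitary, both assertions follow from the single fact that reversibility makes $\tilde{\ess}$ self-adjoint.
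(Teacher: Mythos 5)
Your proof is correct and follows essentially the same route as the paper: both arguments rest on choosing the irreducible representations of $G$ to be unitary, using reversibility to pair $\zee a\zee$ with $\zee a^{-1}\zee$ so that $\rho_i(\zee\ess\zee)$ is self-adjoint, compressing by $\overline{Q}^T(\cdot)Q$ for part (i), and invoking the transpose identity $\rho_{\reg}^{\cA}(\zee g)^T=\rho_{\reg}^{\cA}(\zee g^{-1})$ for part (ii). Your packaging of the pairing as self-adjointness of $\zee\ess\zee$ under the involution $\bm{\tau}\mapsto\bm{\tau}^*$, and your explicit bijective count verifying the transpose identity, are just tidier renderings of steps the paper carries out term-by-term or leaves as "readily verified."
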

\begin{proof}
(i) Suppose that the model is reversible and let $1\leq i\leq q$. 
We have (c.f. (\ref{eq:QrepA})) 
\[ \rho_i^{\cA}(\Aess) = \rho_i^{\cA}(\Aess\zee) := \overline{Q}^T\!\!\rho_i(\Aess\zee)Q\,,\]
where $Q$ is the $D_i\times k_i$ matrix of orthonormal eigenvectors for $\rho_i(\zee)$. 
Since $G$ is a finite group, we may choose the irreducible representation $\rho_i$ on $G$ to be unitary. Then, writing $ \rho_i(\Aess\zee)$ as a sum of matrices of the form $w(\zee a \zee)\left(\rho_i(\zee a\zee) + \rho_i(\zee a^{-1}\zee)\right)$ (omitting the second term if $a=a^{-1}$), each of which is self adjoint, we see that $\rho_i(\Aess\zee)$ is self-adjoint and thus so is $\rho_i^{\cA}(\Aess)$. For claim (ii), we proceed similarly, using the observation that $\rho_{\reg}^{\cA}(\zee a)^T = \rho_{\reg}^{\cA}(\zee a^{-1})$.
\end{proof}

\begin{theorem}\label{thm:likes-gen}
Let $G$ be a finite group with subgroup $Z\subseteq G$. Let $(\cM,w,dist)$ be a reversible biological model for evolution of genomes represented by basis elements of the genome algebra $\cA$ of $G$ with $Z$ and let $\Aess\in\cA$ be the corresponding model element. Let $g\in G$ be an observed instance of a genome $\zee g\in\cA$. Then the following hold.
\begin{enumerate}
\item[(i)] For any $k\in\NN_0$, the probability that the reference genome $\zee$ is transformed into the genome $\zee g$ via $k$ rearrangements chosen from the model is
\[ \alpha_k(\zee g)  = \tfrac{|Z|}{|G|} \chi_{\reg}^{\cA}(\Aess^k \zee g^{-1}) 
				=  \tfrac{|Z|}{|G|}\sum_{i=1}^M D_i \sum_{j=1}^{R_i} \lambda_{i,j}^k \mathrm{tr}(\rho_i^{\cA}(\zee g^{-1}) E^{\cA}_{i,j})\,,\]
where for each $i$, $E^{\cA}_{i,j}$ is the projection onto the eigenspace of the $j$th eigenvalue $\lambda_{i,j}$ of $\rho_i^{\cA}(\Aess)$.
\item[(ii)] If the distribution of rearrangement events in time is $dist=Poisson(1)$, then the probability that the reference genome $\zee$ is transformed into the genome $\zee g$ via the given model in time $T$ is given by the likelihood function
\[ L(T|g) =  e^{-T} \tfrac{|Z|}{|G|}\sum_{i=1}^M D_i \sum_{j=1}^{R_i}\tr(\rho^{\cA}_p(\zee g^{-1}) E^{\cA}_{i,j}) e^{\lambda_{i,j}T}\,.\]
\item[(iii)] For any genome $\zee h\in\cA$ with an instance $h\in[g]_D\cup[g^{-1}]_D$, the path probabilities and likelihood functions of $\zee g$ and $\zee h$ coincide. 
\end{enumerate}
\end{theorem}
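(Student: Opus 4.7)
The plan is to dispatch the three parts in order, each resting on the machinery already assembled in the section. For part~(i), the identity (\ref{eq:gen-alpha-is-char}) already writes $\alpha_k(\zee g)$ as a scalar multiple of $\chi^{\cA}_{\reg}\bigl((\zee\ess)^k\zee g^{-1}\bigr)$, and Theorem~\ref{thm:gen-chars-etc}(iv) replaces this regular character by the weighted sum $\sum_{i} D_i\,\chi_i^{\cA}\bigl((\zee\ess)^k\zee g^{-1}\bigr)$ of irreducible characters of $\cA$. Each $\chi_i^{\cA}$ is a trace, and since $\rho_i^{\cA}$ is a homomorphism, each summand becomes $\tr\bigl(\rho_i^{\cA}(\zee\ess)^k\,\rho_i^{\cA}(\zee g^{-1})\bigr)$. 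Lemma~\ref{lem:gen-diag-ess}(i) ensures $\rho_i^{\cA}(\zee\ess)$ is diagonalisable, so it admits a spectral decomposition $\sum_{j=1}^{R_i}\lambda_{i,j}\,E_{i,j}^{\cA}$; raising to the $k$th power yields $\sum_j \lambda_{i,j}^k E_{i,j}^{\cA}$, and cyclicity of the trace produces the formula in part~(i).

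Part~(ii) is then essentially a substitution. Plug the expression from part~(i) into the likelihood (\ref{eq:simplelike}) with $dist=\Pois(1)$, swap the order of summation over $k$ and the finite sums over $i,j$, and recognise $\sum_{k\ge 0}\tfrac{(\lambda_{i,j}T)^k}{k!}=e^{\lambda_{i,j}T}$ to collapse the infinite sum. This parallels exactly the derivation of (\ref{eq:perms-like}) and (\ref{eq:Alike}), and absolute convergence of the exponential series justifies the interchange.

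For part~(iii) I would argue at the level of the irreducible representations. Given $h=z_1 g z_2\in[g]_D$, the absorption $\zee z = z\zee=\zee$ for $z\in Z$ (immediate from $Z\le G$) yields $\zee h^{-1}\zee=\zee g^{-1}\zee$, so by Theorem~\ref{thm:gen-chars-etc}(ii) one has $\rho_i^{\cA}(\zee h^{-1})=\rho_i^{\cA}(\zee h^{-1}\zee)=\rho_i^{\cA}(\zee g^{-1})$, and hence $\alpha_k(\zee h)=\alpha_k(\zee g)$ directly from part~(i). For $h=z_1 g^{-1}z_2\in[g^{-1}]_D$, the same manipulation shows $\rho_i^{\cA}(\zee h^{-1})=\rho_i^{\cA}(\zee g)$, reducing the claim to proving $\chi_{\reg}^{\cA}((\zee\ess)^k\zee g)=\chi_{\reg}^{\cA}((\zee\ess)^k\zee g^{-1})$. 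Here I would combine the transpose identity $\rho^{\cA}_{\reg}(\zee g)^{\mathsf{T}}=\rho^{\cA}_{\reg}(\zee g^{-1})$ on basis elements with the symmetry of $\rho^{\cA}_{\reg}(\zee\ess)$ guaranteed by Lemma~\ref{lem:gen-diag-ess}(ii): using $\tr(AB^{\mathsf{T}})=\tr(BA^{\mathsf{T}})$ and that $\rho^{\cA}_{\reg}(\zee\ess)^k$ is symmetric, the trace of $\rho^{\cA}_{\reg}(\zee\ess)^k\,\rho^{\cA}_{\reg}(\zee g^{-1})$ converts, after one cyclic rotation, into the trace of $\rho^{\cA}_{\reg}(\zee\ess)^k\,\rho^{\cA}_{\reg}(\zee g)$.

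The main obstacle is the inversion case of part~(iii): parts~(i) and~(ii) amount to bookkeeping once Theorem~\ref{thm:gen-chars-etc} and Lemma~\ref{lem:gen-diag-ess} are in hand, whereas establishing $\alpha_k(\zee g)=\alpha_k(\zee g^{-1})$ requires precisely the combination of the transpose property of the regular representation on basis elements with the self-adjointness of the model element's regular representation under reversibility. A shorter but less algebraic alternative would be a weight-preserving bijection on paths, $(a_1,\ldots,a_k)\mapsto(a_k^{-1},\ldots,a_1^{-1})$, exchanging $\zee\to\zee g$ paths with $\zee\to\zee g^{-1}$ paths via $w(\zee a_j\zee)=w(\zee a_j^{-1}\zee)$; however, keeping the proof within the algebraic framework is more consistent with the approach developed in the paper.
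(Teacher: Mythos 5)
Your proposal is correct and follows essentially the same route as the paper: part (i) via (\ref{eq:gen-alpha-is-char}), the character decomposition of Theorem~\ref{thm:gen-chars-etc}(iv) and the spectral decomposition from Lemma~\ref{lem:gen-diag-ess}(i); part (ii) by substitution into the Poisson likelihood; and part (iii) by combining the $\zee$-absorption $\zee h\zee=\zee g\zee$ with the transpose identity $\rho^{\cA}_{\reg}(\zee g)^{\mathsf{T}}=\rho^{\cA}_{\reg}(\zee g^{-1})$ and the symmetry of $\rho^{\cA}_{\reg}(\zee\ess)$ from Lemma~\ref{lem:gen-diag-ess}(ii). The only (immaterial) difference is that you handle the $[g]_D$-invariance at the level of the irreducible representations via Theorem~\ref{thm:gen-chars-etc}(ii), whereas the paper does it with the regular character and the idempotency of $\zee$.
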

\begin{proof}
The first expression for the path probability $\alpha_k(\zee g)$ was gained above (\ref{eq:gen-alpha-is-char}). To gain the second, we use the decomposition of the regular character from Theorem \ref{thm:gen-chars-etc} (iv), and then for each $i$, use the cyclicity of the trace to write
\[ \chi_i^{\cA}(\Aess^k \zee g^{-1})  
					= \tr\left(\rho_i^{\cA}(\zee g^{-1})\left(\rho_i^{\cA}(\Aess)\right)^k\right)  \,.\]
Then, from Lemma \ref{lem:gen-diag-ess}, we may diagonalise $\rho_i^{\cA}(\Aess)$ to gain the second expression. 
Analogously to the definition in Section~\ref{sec:perms}, but with genomes instead of elements of $G$, we define the likelihood function as
\[ L(T|g):= P(\zee g | T) =\sum_{k=0}^{\infty} \Prob(\zee\!\mapsto\!\zee g \textrm{ via } k \textrm{ events})
     														\Prob(k \textrm{ events in time } T)
				= \sum_{k=0}^{\infty} \alpha_k(\zee g) \tfrac{e^{-T}T^k}{k!}\,.\]
Then substituting in the expression from (i) and simplifying the power series gives (ii). \\
(iii) Let $h\in G$ such that $\zee h \zee = \zee g\zee$ or  $\zee h \zee = \zee g^{-1}\zee$. We show that $\alpha_k(\zee g) = \alpha_k(\zee h)$ for all $k\in\NN_0$, which implies (iii). Let $k\in\NN_0$. Since the trace is cyclic, we have
\[  \chi_{\reg}^{\cA}(\Aess^k \zee g^{-1}) = \tr\left(\rho_{\reg}^{\cA}(\zee g^{-1})\rho_{\reg}^{\cA}(\Aess)^k \right) =  \tr\left(\rho_{\reg}^{\cA}(\Aess)^k \rho_{\reg}^{\cA}(\zee g)\right) 
														=  \chi_{\reg}^{\cA}(\Aess^k \zee g) \,,\]
where the second equality was obtained by taking the transpose of the argument and applying  Lemma~\ref{lem:gen-diag-ess}.
Then from (i) it is clear that $\alpha_k(\zee g) =\alpha_k(\zee g^{-1})$ and these coincide with $\alpha_k(\zee h)$ by Corollary~\ref{coro:reps-same-gen}.
\end{proof}

Since the regular representation of the model element is symmetric, by Lemma \ref{lem:gen-diag-ess}, and the equilibrium distribution is the uniform distribution on the set of genomes, reversibility of the model $\M$ is equivalent to time reversibility of the underlying Markov process. As in Section~\ref{subsec:markov}, the regular representation of the model element in $\cA$ is the transition matrix for a Markov chain with states being genomes, with the probability that genome $\zee g_j$ transitions into genome $\zee g_i$ via $k$ rearrangement steps from the model given by
\[ \rho_{\reg}^{\cA}(\Aess^k)_{ij} = \alpha_k(\zee g_i g_j^{-1})\,.\]
Reversibility then means that for any genomes $\zee g,\zee h\in\cA$, the probability of $\zee g$ transforming into $\zee h$ in $k$ steps via the given model is the same as that of $\zee h$ transforming into $\zee g$ in $k$ steps. In terms of path probabilities, 
\[ \alpha_k(\zee gh^{-1}) = \alpha_k(\zee hg^{-1})\,,\]
which is just a special case of Theorem \ref{thm:likes-gen} (iii). Model reversibility thus implies that the MLE distance is `directionless', or symmetric, as is any evolutionary distance measure based on path probabilities calculated in this framework.

To conclude this section, we return briefly to the class algebra $\cA_0$. By constructing simple examples, one can verify that, in general, the regular representation matrices of non-identity basis elements in $\cA_0$ have non-zero entries on the diagonal, and thus see that we do not have an analogue of Theorem~\ref{thm:gen-chars-etc} (v) for $\cA_0$. That is, the regular character of $\cA_0$ is not counting occurrences of the identity in elements of this algebra, and thus cannot be used to calculate path probabilities as in Theorem~\ref{thm:likes-gen}.

Consider the underlying Markov model here, with transition matrix given by the regular representation of the model element analogue $\rho_{\reg}^0(\Aess\zee)$. 
Now the states are the basis elements $\zee g_i\zee$, each corresponding to an equivalence class $[g_i]_{D}$. 
Since each equivalence class $[g]_D$ is the disjoint union of $\tfrac{|[g]_D|}{|Z|}$ equivalence classes of the form $[g z]$ for some $z\in Z$, each basis element $\zee g\zee$ is the average of $\tfrac{|[g]_D|}{|Z|}$ distinct genomes of the form $\zee g z$  for some $z\in Z$. Thus an arbitrary element of the matrix gives us the {\em average} probability of a genome from a certain class transitioning into a genome from another class (and a diagonal element gives the probability of a transition within a class). This is not refined enough for our purposes, since, given one genome $\zee g$ and two more genomes $\zee g^{\prime}$ and $\zee g^{\prime\prime}$ that are in the same class ($g^{\prime} \in[g^{\prime\prime}]_D$), the probability of transitioning between $\zee g$ and $\zee g^{\prime}$ need not be the same as the probability of transitioning between $\zee g$ and $\zee g^{\prime\prime}$. 

The information is not entirely lost, however; one can use the first column of this Markov matrix to calculate path probabilities. Given an observed instance $g\in G$ of a genome, we find the appropriate basis element $\zee g_{\ell}\zee$ of $\cA_0$ such that $g\in [g_\ell]_D$, and then
\[ \alpha_k(\zee g) = \tfrac{|Z|}{|[g_{\ell}]_D|}\left(\rho_{\reg}^{0}((\Aess\zee)^k)\right)_{\ell 1}\,.\]

Of course, the fact that this path probability information exists in the regular representation does not mean that it is {\em easy} to obtain, in particular since the size of the regular representation in $\cA_0$ is likely to be rapidly increasing with the number of genomic regions (for example for $G=\SN$ and $Z=\DN$, $\dim(\cA_0)$ is proportional to $(N-2)!$) and, being unable to retain the `first column' information through diagonalisation (as one can for the trace), one would need to calculate the $k$th power of the matrix for each desired path probability. We further note that calculating the equivalence classes themselves, and checking for membership of an equivalence class, is a non-trivial exercise and simply not feasible for large numbers of regions.

In any case, the class algebra $\cA_0$ is nicer in some ways than the genome algebra $\cA$, in particular in that it is decomposable (that is, isomorphic to a direct sum of its irreducible modules). This property is formally known as {\em semisimplicity}. Then, since the irreducible modules of $\cA_0$ are identical to those of the algebra $\cA$ and the irreducible representations of the two algebras not only have the same dimension but coincide on the objects of interest,
\[ \rho^0_p(\zee g\zee) = \rho^{\cA}_p(\zee g\zee) = \rho^{\cA}_p(\zee g)\,,\]
one may in fact choose to implement the calculations of the irreducible representations in $\cA$ or $\cA_0$ and then, either way, combine the results together according to the decomposition given in Theorem~\ref{thm:likes-gen}.

\section{Conclusion}\label{sec:conc}

We have presented a coherent algebraic framework for modelling \bt{some classes of} genomes and rearrangements in an algebra that incorporates the inherent physical symmetries into each element. Algebraic frameworks for modelling genome rearrangement have been studied previously \cite{mei-dias,moul-steel,andrew14}, and the importance of including genome symmetry in rearrangement distance calculations has been recognised \cite{attilaand,mles}, however our unified approach, \bt{incorporating symmetry into the position paradigm framework \cite{wildcrazy}}, is new. 

Beginning with the specific case of circular genomes \bt{modelled} with unoriented regions and dihedral symmetry, we explicitly constructed the genome algebra from the symmetric group algebra, and showed that the MLE computations can be performed entirely within this algebra. By identifying genomes and rearrangements with single elements -- permutation clouds -- in the genome algebra, we have advanced previous work that identified genomes with cosets of permutations (where each element of a given coset represents an instance of a genome in a fixed physical orientation) but used the permutations as the basis elements for computation \cite{attilaand,mles,jezandpet,vjez_circ1}. We have both explained and removed the redundancy that we identified  \cite{vjez_circ1} in the implementation of the calculations in the symmetric group algebra.

In \cite{vjez_circ1}, we also signalled a desire to extend our technique for calculating the MLE to other settings, for example to include oriented regions or genomes with non-dihedral symmetry. We have not recorded the results of any explicit computations here, however we have algebraically verified that the technique can indeed be extended to a much more general case. For genomes where a single physical orientation can be represented by elements of a group $G$, and their physical symmetries by the subgroup $Z\subseteq G$, we defined the genome algebra of $G$ with $Z$; here, as in the special case described above, genomes and rearrangements correspond to basis elements (clouds). We showed that the path probabilities and thus the MLE can be formulated and the computations performed entirely in this genome algebra. \bt{An application of the} framework to modelling signed circular genomes, using the hyperoctahedral group and \bt{two possible} symmetry groups, \bt{is presented in \cite{vjoshjez}, along with the results of some sample computations that illustrate how the framework may be applied to compare different models and distance measures.} 

Although the genome algebra has lower dimension than the group algebra (by a factor of $\tfrac{1}{2N}$ in the $\DN$ case and $\tfrac{1}{|Z|}$ in the general case), this does not significantly reduce the computational complexity of calculating the MLE. \bt{We have performed distance calculations, in reasonable time, for genomes with up to twelve unoriented regions (unpublished) and up to six oriented regions \cite{vjoshjez}. Work to extend our initial experimental calculations to implementation of the framework for larger numbers of regions is ongoing. In particular, we are exploring the use of simulations and intend to apply} numerical approximations to make \bt{distance calculations tractable for genomes with larger} numbers of regions. 

\bt{Whilst the framework does not specify a particular rearrangement model (and indeed, allows choice both in the type of rearrangements allowed and their relative probabilities of occurring), we cannot currently model insertions, deletions, or duplications, since the underlying group structure means we are restricted to rearrangements that do not alter the set of regions. This is a clear limitation of the current approach. To address this, we are currently working on extending the framework to a semigroup-based approach, with the aim of accommodating insertions and deletions. Furthermore, whilst one can apply different probabilities to different rearrangements (and, depending on the genome's symmetry, rearrangements at different genomic positions), the current approach does not incorporate intergenic regions or explicitly consider breakpoints. Whether a group- or semigroup-based genome algebra approach can be devised that incorporates these biological realities, and others such as multiple chromosomes,  is another question for future research.}

\bt{Finally, we note that} the applications of this algebraic framework are not limited to calculating MLEs. The likelihood function is built from path probabilities; since our fundamental results hold for these `building blocks', other rearrangement distance measures that are based on path probabilities may be calculated via the genome algebra. We have shown that, via the regular representation of the genome algebra, the general genome rearrangement model can be viewed as a discrete (or, with the addition of the stochastic component, continuous time) Markov chain, and thus represented as a connected graph, generalising the Cayley graph approach \cite{moul-steel,chad-andrew}. \bt{This facilitates the calculation of further distance measures, for example mean first passage time, as demonstrated in \cite{vjoshjez}.}

\bibliographystyle{plain}
\bibliography{biblio}

\end{document}